\newtheorem{lemma}{Lemma}
\newtheorem{observation}{Observation}
\newtheorem{fact}[lemma]{Fact}
\title{Euclidean Bottleneck Steiner Tree is Fixed-Parameter Tractable}
\author{Sayan Bandyapadhyay\thanks{Portland State University, USA. Email: \texttt{sayanb@pdx.edu}.}\ \ \ \ \ 
William Lochet\thanks{LIRMM, Université de Montpellier, CNRS, Montpellier, France. Email: \texttt{william.lochet@gmail.com}.}\ \ \ \ \ 
Daniel Lokshtanov\thanks{University of California, Santa Barbara, USA. Email: \texttt{daniello@ucsb.edu}.}
\vspace{0.2cm}
\\
Saket Saurabh\thanks{Institute of Mathematical Sciences, Chennai, India. Email: \texttt{saket@imsc.res.in}.}\ \ \ \ \ 
Jie Xue\thanks{New York University Shanghai, China. Email: \texttt{jiexue@nyu.edu}.}}
\date{} 
\theoremstyle{definition}
\newcommand{\defparopt}[4]{
\begin{tcolorbox}[colback=gray!5!white,colframe=gray!75!black]
  \vspace{-1mm}
  \begin{tabular*}{\textwidth}{@{\extracolsep{\fill}}lr} #1  & {\bf{Parameter:}} #3 \\ \end{tabular*}
  {\bf{Input:}} #2  \\
  {\bf{Output:}} #4
  \vspace{-1mm}
\end{tcolorbox}
}
\begin{document}



%

\maketitle
\begin{abstract}
In the Euclidean Bottleneck Steiner Tree problem, the input consists of a set of $n$ points in $\mathbb{R}^2$ called terminals and a parameter $k$, and the goal is to compute a Steiner tree that spans all the terminals and contains at most $k$ points of $\mathbb{R}^2$ as Steiner points such that the maximum edge-length of the Steiner tree is minimized, where the length of a tree edge is the Euclidean distance between its two endpoints. The problem is well-studied and is known to be NP-hard. In this paper, we give a $k^{O(k)} n^{O(1)}$-time algorithm for Euclidean Bottleneck Steiner Tree, which implies that the problem is fixed-parameter tractable (FPT). This settles an open question explicitly asked by Bae et al. [Algorithmica, 2011], who showed that the $\ell_1$ and $\ell_{\infty}$ variants of the problem are FPT. Our approach can be generalized to the problem with $\ell_p$ metric for any rational $1 \le p \le \infty$, or even other metrics on $\mathbb{R}^2$.
\end{abstract}

\section{Introduction}
Given a (finite) set $P$ of points in $\mathbb{R}^2$, a \textit{Steiner tree} on $P$ refers to a tree with node set $P \cup S$ for some finite set $S \subseteq \mathbb{R}^2$ called \textit{Steiner points}.
The length of an edge in a Steiner tree is defined as the Euclidean distance between its two endpoints.
In the \textit{Euclidean Steiner tree} (EST) problem, we are given a set $P$ of $n$ points in $\mathbb{R}^2$, and our goal is to compute a Steiner tree on $P$ such that the total length of the tree edges is minimized.
The EST problem is a classical combinatorial optimization problem whose history dates back to the 19th century (see \cite{brazil2014history} for more details).
Motivated by both theoretical interest and practical applications, EST and its variants have been studied extensively in the past decades (see \cite{du2008Steiner,kahng1994optimal} for applications to network problems, and \cite{hauptmann2013compendium} for a compendium of many Steiner tree problems).
The seminal work of Arora~\cite{arora1998polynomial} and Mitchell~\cite{DBLP:journals/siamcomp/Mitchell99} independently gave polynomial-time approximation schemes (PTAS) for the EST problem.
Besides the Euclidean setting, the Steiner tree problem can also be defined in general graphs, where we are given an edge-weighted graph $G$ and a set of \textit{terminal} vertices, and the goal is to find a minimum-weight tree in $G$ which connects all terminal vertices.
The Steiner tree problem in the graph setting was considered as one of the most fundamental NP-complete problems in the seminal paper of Karp \cite{karp1975computational}, and has been extensively studied in the context of parameterized algorithms~\cite{BjorklundHKK07,DBLP:books/sp/CyganFKLMPPS15,DowneyFbook13,DreyfusW71,FlumGrohebook,Nederlof13}; we shall briefly summarize the work on this topic at the end of this section.

In this paper, we study a min-max version of the EST problem, called \textit{Euclidean Bottleneck Steiner Tree} (EBST), which originates back to the late 80s and has several applications in VLSI design, facility location, and communication networks \cite{chiang1989powerful,du2002approximations}.

\defparopt{\textsc{Euclidean Bottleneck Steiner Tree}}{A set $P$ of $n$ points in $\mathbb{R}^2$ and an integer $k$.}{$k$}{Compute a Steiner tree on $P$ with at most $k$ Steiner points such that the maximum length of the tree edges is minimized.}

\noindent
Note that in the EBST problem, the restriction $k$ on the number of Steiner points is necessary (while this is not the case in the EST problem), because without such a restriction, we can have a Steiner tree in which the length of every edge is arbitrarily small.
(In fact, the EST problem with a restricted number $k$ of Steiner points has also been studied.
Brazil et al.~\cite{brazil2015generalised} gave a more general $O(n^k)$ time algorithm for this variant of EST.)

The EBST problem has received much attention over years~\cite{bae2011exact,bae2010exact,chen2001approximations,chiang1989powerful,du2002approximations,ganley1996optimal,lin1999steiner,sarrafzadeh1992bottleneck,wang2002approximation}.
It is known to be NP-hard, and even hard to approximate up to a factor of $\sqrt{2}$~\cite{du2002approximations}.
On the positive side, a polynomial-time 1.866-approximation algorithm for EBST was known~\cite{wang2002approximation}.
Furthermore, different variants of EBST have been considered, including the full-tree version which requires all points in $P$ to be the leaves of the Steiner tree~\cite{abu2011euclidean,biniaz2014optimal}, the bichromatic version~\cite{abu2019bottleneck}, the variant which prohibits edges between Steiner points \cite{li2004approximation}, etc. 
A closely related problem, Euclidean Bottleneck Steiner Path, has also been studied \cite{abu2011path,hou2010optimal}.

Even though approximation algorithms were known for EBST in full generality (as well as for some special cases)~\cite{ganley1996optimal,li2004approximation,sarrafzadeh1992bottleneck,wang2002approximation}, no exact algorithm was known for over more than one decade.
Bae et al.~\cite{bae2011exact,bae2010exact} initiated the study on the exact algorithms for EBST.
They first observed that when $k \leq 2$, EBST is polynomial-time solvable~\cite{bae2010exact}.
Then in the follow-up work~\cite{bae2011exact}, they gave an $n^{O(k)}$-time algorithm for EBST, showing that the problem is polynomial-time solvable for any fixed $k$.
However, this does not settle the fixed-parameter tractability of the problem.
Bae et al. explicitly asked the following open question in~\cite{bae2011exact}.

\begin{tcolorbox}[colback=gray!5!white,colframe=gray!75!black]
{\bf Question:}  Is \textsc{Euclidean Bottleneck Steiner Tree} fixed-parameter tractable (FPT), i.e., does it admit an algorithm with running time $f(k) \cdot n^{O(1)}$?
\end{tcolorbox}

On the positive side, Bae et al.~\cite{bae2011exact} already showed that the $\ell_1$ variant (or equivalently the $\ell_\infty$ variant) of EBST is FPT.
However, as is typical in computational geometry, the problem under $\ell_1$ metric is substantially easier than that under Euclidean metric, as the $\ell_1$-disks are axis-parallel squares (rotated by an angle of $\frac{\pi}{4}$), which are usually much more tractable.
In fact, the FPT algorithm given by Bae et al.~\cite{bae2011exact} heavily relies on the geometry of squares (or more generally, rectilinear domains), and thus they failed to extend the algorithm to EBST.
For more than one decade, the above question has remained open.


In this paper, we resolve the above question by designing the first FPT algorithm for EBST.
Specifically, we obtain the following result.
\begin{restatable}{theorem}{main}
\label{thm:ebst}
\textsc{Euclidean Bottleneck Steiner Tree} with $k$ Steiner points can be solved in $k^{O(k)}\cdot n^{O(1)}$ time.
In particular, the problem is fixed-parameter tractable.
\end{restatable}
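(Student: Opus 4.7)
The plan is to reduce EBST to an enumeration over bounded combinatorial types combined with polynomial-time geometric feasibility checks. First I would reduce to the \emph{decision} version: for a fixed value $\lambda$, decide whether there is a Steiner tree on $P$ with at most $k$ Steiner points and all edges of Euclidean length at most $\lambda$. The optimal bottleneck $\lambda^*$ should lie in a polynomial-size candidate set — for instance, the set of values at which a small (constant-size) geometric configuration of terminals becomes tight, yielding $n^{O(1)}$ candidates — so that enumerating or binary searching over $\lambda$ adds only an $n^{O(1)}$ overhead.

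The main step is a structural theorem for optimal solutions. I would look for an optimal EBST that has a \emph{canonical skeleton}: an (unlabeled) tree $T_S$ on at most $k$ vertices representing the Steiner points together with their mutual adjacencies in the solution tree. Up to relabeling, there are $k^{O(k)}$ such skeletons, which I would enumerate. For each skeleton, the attachment of the $n$ terminals is determined by specifying, for each Steiner vertex and skeleton edge, the subset of terminals routed through it. The key claim I would aim to prove is that, although the naive number of attachments is $n^{O(k)}$, in an optimal solution each Steiner point's position lies in the intersection of Euclidean disks of radius $\lambda$ around its neighbors — a planar convex region which by Helly's theorem in $\mathbb{R}^2$ is determined by $O(1)$ \emph{witness} neighbors. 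Hence it suffices to guess, for each Steiner vertex, $O(1)$ witness terminals (or adjacent Steiner vertices), giving $n^{O(1)}$ configurations per skeleton.

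Given a guessed skeleton and witness configuration, the remaining task is two-fold: (i) decide whether the Steiner points can be simultaneously placed in $\mathbb{R}^2$ so that every claimed adjacency has length at most $\lambda$, and (ii) route the non-witness terminals into the solution. Step (i) is a convex feasibility problem (intersection of planar disks, with Steiner–Steiner constraints propagated along $T_S$), solvable in polynomial time by LP-type or direct geometric techniques. For step (ii) I would, once the Steiner points are pinned down, reduce the question to checking connectivity in the Euclidean $\lambda$-disk graph on $P$ together with the chosen Steiner positions, which amounts to a minimum-bottleneck spanning-tree computation and runs in $n^{O(1)}$.

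The hard part is clearly the structural step: one must prove that a compact, $n^{O(1)}$-sized witness description suffices to reconstruct an optimal solution. This is precisely where the Euclidean (as opposed to $\ell_1$) structure is indispensable: disks are round and satisfy strong Helly-type properties in the plane, so each Steiner point's feasible region is pinned down by a constant number of witnesses, whereas under rectilinear metrics the boundaries align with axes and one can need $\Omega(k)$ witnesses per point. I would expect the bulk of the technical work to go into formalizing the right notion of canonical solution and witness, and into showing that every optimal EBST can be ``re-centered'' into such a canonical form without increasing the bottleneck length.
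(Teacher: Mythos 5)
Your plan correctly identifies the high-level phases (decision version, enumerate $k^{O(k)}$ Steiner topologies, reduce to a geometric placement problem), but the pivotal step — collapsing the dependence on $n$ from $n^{O(k)}$ to $n^{O(1)}$ — does not work as proposed, and this is precisely where the paper's novelty lies.

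Your key claim is that, by a Helly-type argument, each Steiner point's position is determined by $O(1)$ witness neighbors, ``giving $n^{O(1)}$ configurations per skeleton.'' There are two problems. First, the arithmetic: guessing $O(1)$ witness terminals for \emph{each} of the $k$ Steiner vertices from among $n$ terminals yields $n^{O(k)}$ configurations, not $n^{O(1)}$ — this is exactly the $n^{O(k)}$ algorithm of Bae et al.\ that the paper aims to beat, not an improvement over it. Second, and more fundamentally, Helly's theorem does not localize a Steiner point by $O(1)$ constraints here, because the constraints on a Steiner point $s$ are \emph{not} simply ``lie in a disk around each terminal neighbor.'' If $s$ is adjacent to another Steiner point $s'$, then the constraint is ``lie within distance $\lambda$ of \emph{some feasible position} of $s'$,'' and the set of feasible positions of $s'$ is itself a recursively defined, generally non-convex region. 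The resulting feasible region for $s$ is not an intersection of disks, and in fact fails to be convex. Helly gives you a bound on the emptiness-testing complexity for convex families; it does not select a constant-size witness set that pins a point in a coupled, non-convex system.

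The paper sidesteps this barrier entirely. After reducing to a fixed topology and to the decision problem, it does \emph{not} try to identify $O(1)$ decisive terminals per Steiner vertex. Instead it (a) observes that in any feasible placement all Steiner images lie within distance $O(k)$ of each other (Lemma~\ref{lem-propphi}), so once a grid cell for \emph{one} terminal attachment is fixed (only $n$ choices), every Steiner point and every other attachment cell has only $O((k/\varepsilon)^2)$ choices, yielding $k^{O(k)}\cdot n$ guesses total rather than $n^{O(k)}$; and (b) computes, bottom-up along the topology tree, a \emph{feasible region} $R(s)$ for each Steiner node — a pseudo-convex circular domain — via the recursion $R(s)=\bigl(\bigcap_{s'\in\mathsf{Ch}(s)}(R(s')\oplus D)\bigr)\cap\xi(s)$. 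The technical heart is Lemma~\ref{lem-key}, which bounds the complexity of intersections of Minkowski sums $R_i\oplus D$ of tiny regions with the unit disk by $O(r^2 M)$, using a decomposition into mutually monotone curve pieces; this keeps each feasible region's complexity at $2^{O(k)}\cdot n$. None of this machinery — the grid-cell guessing, the Minkowski-sum feasible-region recursion, or the intersection-complexity bound — appears in your proposal, and your Helly-based substitute for it does not hold up.
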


\noindent
Theorem~\ref{thm:ebst} directly extends to the $\ell_p$ variant of EBST for any $1 \leq p \leq \infty$, or even a broader class of metrics on $\mathbb{R}^2$ (such as a linear combination of $\ell_p$ metrics).
Our algorithm combines, in a nontrivial way, previous observations for EBST, algorithmic ideas of Bae et al.~\cite{bae2011exact} for the $\ell_1$ variant, results from combinatorial geometry for Minkowski sums, and various new geometric insights to the problem.
A key ingredient of our result is a bound on the intersection complexity of certain Minkowski sums of circular domains (Lemma~\ref{lem-key}), which might be of independent interest and can possibly find further applications.

\subsection{Parameterized complexity for Steiner tree in graphs}
As we consider \textit{parameterized algorithms} for EBST, here we briefly overview the parameterized study on the Steiner tree problem in graph setting~\cite{DBLP:books/sp/CyganFKLMPPS15,DowneyFbook13,FlumGrohebook}. 
The classic dynamic programming algorithm for Steiner tree of Dreyfus and Wagner~\cite{DreyfusW71}, with running time $3^{|T|} \cdot \log{W} \cdot n^{O(1)}$ where $|T|$ is the number of terminals, from 1971 might well be the first parameterized algorithm for {\em any} problem.
The study of parameterized algorithms for Steiner tree has led to the design of important techniques, such as fast subset convolution~\cite{BjorklundHKK07} and the use of branching walks~\cite{Nederlof13}.
Research on the parameterized complexity of Steiner tree is still on-going, with very recent significant advances for the planar version of the problem~\cite{MarxPP2017,pilipczuk_et_al:LIPIcs:2013:3947,PilipczukPSL14}.
Furthermore, algorithms for Steiner tree are frequently used as a subroutine in fixed-parameter tractable (FPT) algorithms for other problems; examples include vertex cover problems \cite{GuoNW05}, near-perfect phylogenetic tree reconstruction \cite{BlellochDHRSS06}, and connectivity augmentation problems~\cite{BasavarajuFGMRS14}.
Apart from $|T|$, another natural parameter associated with the problem is the number $k$ of Steiner vertices. It is known that Steiner tree is W[2]-hard when parametrized by $k$~\cite{DBLP:books/sp/CyganFKLMPPS15}. 

\nocite{georgakopoulos19871}

\section{Reducing to a fixed topology} \label{sec-reduction}
In the first step, we reduce the original EBST problem to a variant where the ``topology'' of the optimal Steiner tree is given to us.
This step mostly follows from the literature~\cite{bae2011exact,bae2010exact}, while we present it in a self-contained way.
Let $(P,k)$ be an EBST instance, where $P$ is a set of $n$ points in $\mathbb{R}^2$ and $k$ is the parameter for the number of Steiner points.
For simplicity, we call a Steiner tree on $P$ with $k$ Steiner points a {$k$-BST} if it is optimal in terms of bottleneck length, i.e., the maximum length of the tree edges is minimized.
Our goal is just to find a $k$-BST on $P$.
A \textit{minimum spanning tree} on $P$, denoted by $\text{MST}(P)$, is a spanning tree with node set $P$ that minimizes the sum of the length of its edges.
For an integer $i \in [n]$, let $\mathcal{F}_i$ be the forest obtained from $\text{MST}(P)$ by removing the $i-1$ longest edges of $\text{MST}(P)$.
Clearly, $\mathcal{F}_i$ has $i$ connected components.
For convenience, we denote by $P^2$ the set of all edges connecting two points in $P$.
Every edge of a Steiner tree on $P$ is either in $P^2$ or incident to a Steiner point.
We need the following property of $k$-BSTs.

\begin{lemma}[\cite{bae2011exact,bae2010exact}] \label{lem-BSTstructure}
There exists a $k$-BST $\mathcal{T}^*$ on $P$ satisfying the following.
\begin{itemize}
    \item For some number $K = O(k)$, $\mathcal{T}^*$ uses all edges of $\mathcal{F}_K$ but no other edges in $P^2$.
    \item Every Steiner point in $\mathcal{T}^*$ is of degree at most $5$.
\end{itemize}
\end{lemma}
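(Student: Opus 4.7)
The plan is to start from an arbitrary optimal $k$-BST and apply two rounds of local modifications: first enforce the degree bound on every Steiner point, and then enforce the MST-like structure on the $P^2$-edges. To make the second step clean, among all optimal $k$-BSTs satisfying the degree bound I would pick one whose multiset of $P^2$-edge lengths is lexicographically smallest when sorted in decreasing order; any improving swap will then contradict minimality.

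For the degree bound, the standard Euclidean angle argument applies. If a Steiner point $s$ has degree at least $6$, then by pigeonhole two cyclically consecutive neighbors $v_i, v_{i+1}$ subtend an angle of at most $60^\circ$ at $s$. A law-of-cosines computation gives $|v_i v_{i+1}| \le \max(|sv_i|, |sv_{i+1}|)$, which is bounded by the current bottleneck $L^*$. Removing $sv_{i+1}$ and adding the $P^2$-edge $v_i v_{i+1}$ keeps the object a tree (it reconnects exactly the two components created by removing $sv_{i+1}$, since $v_i$ lies on one side and $v_{i+1}$ on the other), does not increase the bottleneck, and strictly decreases the degree of $s$. Iterating until no such $s$ remains yields the degree-$5$ property.

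For the MST structure, let $\mathcal{T}^*$ be the lex-minimal $k$-BST described above and let $E^* \subseteq P^2$ be its $P^2$-edges. I would establish three facts in sequence. First, every edge of $E^*$ must be an edge of $\text{MST}(P)$: otherwise, removing some $e \in E^* \setminus E(\text{MST}(P))$ from $\mathcal{T}^*$ induces a partition of $P$, and by the cut property of MSTs some MST edge $e'$ crosses this partition with $|e'| \le |e|$, enabling a swap that contradicts lex-minimality. Second, the forest $(P, E^*)$ has $m \le 4k+1$ components: contracting each component of $(P,E^*)$ to a single supernode in $\mathcal{T}^*$ yields a tree on $m+k$ nodes with $m+k-1$ edges, each incident to at least one Steiner point (distinct supernodes have no $P^2$-edges between them by construction), so the sum of Steiner-point degrees is at least $m+k-1$, and the degree-$5$ bound forces $5k \ge m+k-1$, i.e., $m \le 4k+1 =: K$. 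Third, $E^* = E(\mathcal{F}_m)$: if some MST edge $e \in E(\mathcal{F}_m)\setminus E^*$ were missing while a heavier MST edge $e' \in E^*\setminus E(\mathcal{F}_m)$ were used, then an exchange argument that tracks which side of the cut induced by removing $e'$ from $\mathcal{T}^*$ contains each endpoint of $e$, swapping (and possibly iterating) accordingly, would produce a lex-smaller $k$-BST. Taking $K = 4k+1$ then finishes the proof.

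The main obstacle is the last MST-exchange step: one must argue that the swap preserves connectivity of the \emph{full} Steiner tree, not merely of a spanning tree on $P$, and that it strictly reduces the lex-order. This requires carefully tracking how the contraction from $E^*$-components to supernodes interacts with the cuts induced by removing single edges from $\mathcal{T}^*$, and it may require a chain of simultaneous swaps rather than a single one. By contrast, the degree bound and the counting step $m = O(k)$ are essentially routine once this exchange mechanism is in place.
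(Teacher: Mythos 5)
This lemma is cited from the prior work of Bae et al.\ and is not re-proved in the present paper, so there is no in-paper proof to compare against; I will evaluate your argument on its own terms. Most of your scaffolding is sound: the $60^\circ$ angle argument for the degree-$5$ bound, the double-counting bound $m \le 4k+1$ on the number of components of the $P^2$-forest after contraction, and the step showing $E^* \subseteq E(\mathrm{MST}(P))$ via the cut property together with a lexicographic (or total-length) secondary objective are all essentially correct. Two small slips in the degree step: the replacement edge $v_iv_{i+1}$ need not be a $P^2$-edge, since one or both of $v_i,v_{i+1}$ may themselves be Steiner points; and you should justify termination of that loop (the swap can increase the degree of a Steiner neighbor, and when the angle is exactly $60^\circ$ with $|sv_i|=|sv_{i+1}|$ the total length does not strictly drop), which is standard to handle by a perturbation or an additional tie-breaking objective.

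The gap you flag in the last step is real, and it is unclear that lexicographic minimality closes it by itself. After removing a too-long edge $e' \in E^*\setminus E(\mathcal{F}_m)$ from $\mathcal{T}^*$, the cut $(P_1,P_2)$ it induces need not be crossed by any $f \in E(\mathcal{F}_m)\setminus E^*$, so no single $P^2$-swap reconnects the Steiner tree; moreover, the lex objective on the $P^2$-edge multiset gives no incentive to \emph{add} the short MST edges that $\mathcal{F}_m$ contains but $E^*$ does not, so it is not clear the lex-minimal tree satisfies $E^*=E(\mathcal{F}_m)$ at all. Fortunately you do not need this. Once you have $E^*\subseteq E(\mathrm{MST}(P))$, every edge of $E^*$ and of the Steiner-incident edge set $E_S$ has length at most the bottleneck $b$. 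Let $\ell$ be the number of connected components of the threshold graph on $P$ with all $P^2$-edges of length at most $b$; equivalently, $E(\mathcal{F}_\ell)$ is exactly the set of MST edges of length at most $b$, so $E^*\subseteq E(\mathcal{F}_\ell)$. Then $H=(P\cup S,\,E(\mathcal{F}_\ell)\cup E_S)$ contains $\mathcal{T}^*$ and hence is connected, every edge of $H$ has length at most $b$, and its only $P^2$-edges are those of $\mathcal{F}_\ell$. Any spanning tree of $H$ containing the forest $E(\mathcal{F}_\ell)$ is therefore a $k$-BST whose $P^2$-edges are exactly $E(\mathcal{F}_\ell)$ and whose Steiner degrees only decrease; since $(P,E^*)$ refines the components of the threshold graph, $\ell\le m\le 4k+1$, so $K=\ell=O(k)$ as required. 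This sidesteps the delicate chain-of-swaps argument entirely.
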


We can afford to guess the number $K$ in the above lemma since $K = O(k)$.
Now suppose we have $K$ in hand and want to find a $k$-BST $\mathcal{T}^*$ satisfying the conditions in Lemma~\ref{lem-BSTstructure}.
Let $E(\mathcal{F}_K)$ denote the set of edges of $\mathcal{F}_K$ and $C_1,\dots,C_K$ be the connected components of $\mathcal{F}_K$.
With a little abuse of notations, we also use $C_1,\dots,C_K$ to denote the corresponding sets of points in $P$.
Thus, $C_1,\dots,C_K$ is a partition of $P$. 
Consider the tree $\mathcal{T}^*/E(\mathcal{F}_K)$ obtained from $\mathcal{T}^*$ by contracting all edges in $E(\mathcal{F}_K)$.
Clearly, $\mathcal{T}^*/E(\mathcal{F}_K)$ contains $K+k$ nodes, among which $K$ nodes correspond to the components $C_1,\dots,C_K$ and the other $k$ nodes correspond to the $k$ Steiner points.
As $K+k = O(k)$, we can afford to guess the ``topology'' of $\mathcal{T}^*/E(\mathcal{F}_K)$.
Formally, we consider all trees of $K+k$ nodes in which $K$ nodes are marked as $C_1,\dots,C_K$.
The number of such trees is $k^{O(k)}$ and one of these trees, $\mathcal{T}_\mathsf{top}$, is isomorphic to $\mathcal{T}^*/E(\mathcal{F}_K)$ where the isomorphism maps the $C_i$-vertex of $\mathcal{T}_\mathsf{top}$ to the $C_i$-vertex of $\mathcal{T}^*/E(\mathcal{F}_K)$ for every $i \in [K]$.
By trying all possibilities, we can assume that $\mathcal{T}_\mathsf{top}$ is known.
Note that by Lemma~\ref{lem-BSTstructure}, every edge of $\mathcal{T}^*/E(\mathcal{F}_K)$ is incident to a Steiner point and every Steiner point is of degree at most $5$ in $\mathcal{T}^*/E(\mathcal{F}_K)$.
So we may also assume these properties of $\mathcal{T}_\mathsf{top}$.

Let $t_1,\dots,t_k$ be the nodes of $\mathcal{T}_\mathsf{top}$ other than the ones marked by as $C_1,\dots,C_K$, which correspond to the $k$ Steiner points.
For convenience, we write $S = \{t_1,\dots,t_k\}$.
Now the problem becomes finding a map $\phi: S \rightarrow \mathbb{R}^2$ such that $\max_{(t,t') \in E(\mathcal{T}_\mathsf{top})} d_\phi(t,t')$ is minimized, where $E(\mathcal{T}_\mathsf{top})$ denotes the set of edges of $\mathcal{T}_\mathsf{top}$.
Here, $d_\phi(t,t')$ is defined as follows.
If $t,t' \in S$, then $d_\phi(t,t') = \mathsf{dist}(\phi(t),\phi(t'))$, i.e., the Euclidean distance between $\phi(t)$ and $\phi(t')$.
If $t \in S$ and $t' \notin S$, then $t'$ is marked as $C_i$ for some $i \in [K]$ and we set $d_\phi(t,t') = \min_{x \in C_i} \mathsf{dist}(\phi(t),x)$; the definition for the case $t \notin S$ and $t' \in S$ is symmetric.
The case $t,t' \notin S$ cannot happen for any $(t,t') \in E(\mathcal{T}_\mathsf{top})$, by our assumption.
We call this problem \textit{fixed-topology} EBST.
In fact, we can further reduce the optimization version of fixed-topology EBST to the decision version, which aims to check whether there exists a map $\phi: S \rightarrow \mathbb{R}^2$ such that $\max_{(t,t') \in E(\mathcal{T}_\mathsf{top})} d_\phi(t,t') \leq \lambda$ for a given value $\lambda$, or equivalently, $d_\phi(t,t') \leq \lambda$ for all $(t,t') \in E(\mathcal{T}_\mathsf{top})$.
As we demonstrate later, an FPT algorithm for the decision problem can be converted to an algorithm for the optimization problem having the same asymptotic time complexity. As demonstrated in \cite{bae2011exact}, for all $i\in [K]$, given the unique point in $C_i$ that gets connected to a Steiner point in a fixed optimal solution, the location of the optimal Steiner points can be computed in time FPT in $k$. One can easily find these unique points by brute-force, i.e., enumerating all possible $n^{O(k)}$ choices. As we aim for an FPT algorithm, we cannot use such a brute-force approach. Hence, we use a completely different approach than the approach of \cite{bae2011exact} in the $\ell_2$ metric. However, the starting point of our approach is the FPT algorithm of \cite{bae2011exact} in the $\ell_1$ metric.    


\section{The main algorithm}
\label{sec:FPT}

According to the discussion in Section~\ref{sec-reduction}, it suffices to design an algorithm for (the decision version of) fixed-topology EBST.
Now we restate the setting of the problem.
We have a set $P$ of $n$ points in $\mathbb{R}^2$ which is partitioned into $C_1,\dots,C_K$, a tree $\mathcal{T}_\mathsf{top}$ of $K+k$ nodes in which $K$ nodes are marked as $C_1,\dots,C_K$ (called terminal nodes) and the other $k$ nodes are called Steiner points, and a number $\lambda > 0$.
The tree $\mathcal{T}_\mathsf{top}$ satisfies \textbf{(i)} every edge is incident to a Steiner point and \textbf{(ii)} every Steiner point is of degree at most 5.
With an abuse of notations, in what follows, we shall just use $C_1,\dots,C_K$ to denote the terminal nodes of $\mathcal{T}_\mathsf{top}$ (instead of saying that they are marked as $C_1,\dots,C_K$).
Our goal is to compute a map $\phi: S \rightarrow \mathbb{R}^2$ such that $d_\phi(t,t') \leq \lambda$ for all $(t,t') \in E(\mathcal{T}_\mathsf{top})$ where $S \subseteq \mathcal{T}_\mathsf{top}$ is the set of Steiner points in $\mathcal{T}_\mathsf{top}$, or conclude the non-existence of such a map.
By scaling the points in $P$, we can assume $\lambda = 1$ without loss of generality.
Furthermore, we can assume that every leaf (i.e., a node of degree 1) of $\mathcal{T}_\mathsf{top}$ is a terminal node.
Indeed, if $s \in S$ is a leaf in $\mathcal{T}_\mathsf{top}$ and $\phi_0: S \backslash \{s\} \rightarrow \mathbb{R}^2$ is a map satisfying $d_\phi(t,t') \leq \lambda$ for all $(t,t') \in E(\mathcal{T}_\mathsf{top} \backslash \{s\})$, then we can easily extend $\phi_0$ to a map $\phi: S \rightarrow \mathbb{R}^2$ satisfying the desired property by choosing the point $\phi(s)$ to make $d_\phi(s,t) = 0$ where $t$ is the only node in $\mathcal{T}_\mathsf{top}$ neighboring to $s$.
Therefore, we can simply remove $s$ from $\mathcal{T}_\mathsf{top}$ (and also from $S$).
By doing this repeatedly, we can reach the situation where every leaf of $\mathcal{T}_\mathsf{top}$ is a terminal node.

In fact, we can further reduce to the situation where the leaves of $\mathcal{T}_\mathsf{top}$ are \textit{exactly} the terminal nodes in $\mathcal{T}_\mathsf{top}$, i.e., $C_1,\dots,C_K$.
Consider the forest $\mathcal{F}$ obtained from $\mathcal{T}_\mathsf{top}$ by removing the terminal nodes and the edges incident to them.
The node set of $\mathcal{F}$ is $S$.
The connected components of $\mathcal{F}$ induce a partition $S_1,\dots,S_r$ of $S$ such that there is no edge between $S_i$ and $S_j$ in $\mathcal{T}_\mathsf{top}$ for any $i \neq j$.
Constructing a map $\phi: S \rightarrow \mathbb{R}^2$ is equivalent to constructing maps $\phi_i: S_i \rightarrow \mathbb{R}^2$ for all $i \in [r]$.
Note that the construction of each map $\phi_i$ can be done individually, as there is no edge between $S_i$ and $S_j$ in $\mathcal{T}_\mathsf{top}$ for any $i \neq j$.
Formally, let $\mathcal{T}_i$ be the subtree of $\mathcal{T}_\mathsf{top}$ consisting of the nodes in $S_i$ and their neighbors.
If each $\phi_i$ satisfies $d_{\phi_i}(t,t') \leq 1$ for all $(t,t') \in E(\mathcal{T}_i)$, then the map $\phi$ obtained by setting $\phi_{|S_i} = \phi_i$ satisfies $d_\phi(t,t') \leq 1$ for all $(t,t') \in E(\mathcal{T}_\mathsf{top})$.
Conversely, if $\phi: S \rightarrow \mathbb{R}^2$ satisfies $d_\phi(t,t') \leq 1$ for all $(t,t') \in E(\mathcal{T}_\mathsf{top})$, then $\phi_i = \phi_{|S_i}$ must satisfy $d_{\phi_i}(t,t') \leq 1$ for all $(t,t') \in E(\mathcal{T}_i)$.
Therefore, it suffices to construct each $\phi_i$ individually.
Note that in the tree $\mathcal{T}_i$, every leaf is a terminal node and every internal node is a Steiner point (in $S_i$).
With this reduction, we can assume without loss of generality that the leaves of $\mathcal{T}_\mathsf{top}$ are exactly $C_1,\dots,C_K$.
This assumption guarantees the following nice property of the desired map $\phi$, which turns out to be useful when we design our algorithm.
\begin{lemma} \label{lem-propphi}
Assume the leaves of $\mathcal{T}_\mathsf{top}$ are exactly the terminal nodes in $\mathcal{T}_\mathsf{top}$.
If $\phi: S \rightarrow \mathbb{R}^2$ satisfies $d_\phi(t,t') \leq 1$ for all $(t,t') \in E(\mathcal{T}_\mathsf{top})$, then $\mathsf{dist}(\phi(a),\phi(b)) \leq k-1$ for all $a,b \in S$.
\end{lemma}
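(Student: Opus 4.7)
The plan is to use the tree structure to bound the graph-distance in $\mathcal{T}_\mathsf{top}$ between any two Steiner points, and then apply the triangle inequality to the images under $\phi$. The crucial observation is that the hypothesis on leaves forces the unique path in $\mathcal{T}_\mathsf{top}$ between any two Steiner points to stay entirely inside $S$.

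First I would fix $a, b \in S$ and consider the unique $a$-to-$b$ path $\pi = (v_0, v_1, \dots, v_\ell)$ in $\mathcal{T}_\mathsf{top}$, where $v_0 = a$ and $v_\ell = b$. For any intermediate node $v_i$ with $0 < i < \ell$, the node $v_i$ has at least two neighbors in $\mathcal{T}_\mathsf{top}$ (namely $v_{i-1}$ and $v_{i+1}$), so it is not a leaf of $\mathcal{T}_\mathsf{top}$. Since by assumption the leaves of $\mathcal{T}_\mathsf{top}$ are exactly the terminal nodes $C_1, \dots, C_K$, every internal node of $\mathcal{T}_\mathsf{top}$ is a Steiner point. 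Thus $v_i \in S$ for every $0 < i < \ell$, and together with $v_0, v_\ell \in S$ this shows that all nodes of $\pi$ lie in $S$.

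Next I would bound $\ell$. Since $\pi$ consists of $\ell + 1$ distinct nodes, all of which belong to $S$, and $|S| \le k$, we immediately get $\ell \le k - 1$. Now for each edge $(v_{i-1}, v_i) \in E(\mathcal{T}_\mathsf{top})$ on the path, both endpoints are Steiner points, so by the definition of $d_\phi$ on $S \times S$ we have
\[
\mathsf{dist}(\phi(v_{i-1}), \phi(v_i)) = d_\phi(v_{i-1}, v_i) \le 1.
\]
Applying the triangle inequality along $\pi$ yields
\[
\mathsf{dist}(\phi(a), \phi(b)) \le \sum_{i=1}^{\ell} \mathsf{dist}(\phi(v_{i-1}), \phi(v_i)) \le \ell \le k - 1,
\]
which is the desired bound.

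No real obstacle arises here: the only subtle point is observing that the assumption on leaves forces all intermediate nodes on the path $\pi$ to be Steiner points, so that each edge of $\pi$ falls into the ``Steiner-to-Steiner'' case of the definition of $d_\phi$ (where $d_\phi$ coincides with Euclidean distance). Once this is in place, the rest is a direct triangle inequality along a path of length at most $|S| - 1 \le k - 1$.
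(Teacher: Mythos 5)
Your proof is correct and follows essentially the same approach as the paper: observe that the unique $a$-to-$b$ path lies entirely inside $S$ (since intermediate nodes are internal, hence Steiner points), bound its length by $|S|-1 \le k-1$, and apply the triangle inequality edge by edge. The only difference is that you spell out the ``intermediate nodes are non-leaves, hence Steiner points'' step explicitly, which the paper compresses into the remark that $S$ is the set of internal nodes.
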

\begin{proof}
By the assumption, $S$ is just the set of internal nodes of $\mathcal{T}_\mathsf{top}$, and thus for any $a,b \in S$, the simple path $\pi$ in $\mathcal{T}_\mathsf{top}$ connecting $a$ and $b$ only contains the nodes in $S$.
Suppose $\pi = (t_0,t_1,\dots,t_r)$ where $t_0 = a$ and $t_r = b$.
We have $r \leq k-1$ because $|S| \leq k$.
By definition, $\mathsf{dist}(\phi(t_{i-1}),\phi(t_i)) = d_\phi(t_{i-1},t_i) \leq 1$.
Thus, by triangle inequality, $\mathsf{dist}(\phi(a),\phi(b)) \leq \sum_{i=1}^r \mathsf{dist}(\phi(t_{i-1}),\phi(t_i)) \leq k-1$.
\end{proof}

For convenience, we make $\mathcal{T}_\mathsf{top}$ rooted by picking an arbitrary Steiner point $\mathsf{rt} \in S$ as the root of $\mathcal{T}_\mathsf{top}$.
Our algorithm for constructing $\phi$ borrows the high-level idea of \textit{feasible regions} from the algorithm of Bae et al. \cite{bae2011exact} for the $\ell_1$ variant of EBST.
So before discussing our algorithm, let us first briefly review this idea.
For a node $t \in \mathcal{T}_\mathsf{top}$, we denote by $\mathcal{T}_t$ the subtree of $\mathcal{T}_\mathsf{top}$ rooted at $t$.
For each Steiner point $s \in S$, the \textit{feasible region} $R(s) \subseteq \mathbb{R}^2$ of $s$ is the set of all points $x \in \mathbb{R}^2$ such that there exists a map $\phi_s: S \cap \mathcal{T}_s \rightarrow \mathbb{R}^2$ satisfying $d_{\phi_s}(t,t') \leq 1$ for all $(t,t') \in E(\mathcal{T}_s)$ and $\phi_s(s) = x$.
(Bae et al. \cite{bae2011exact} defined the feasible regions in terms of the $\ell_1$ metric, which is the same except that the Euclidean distance function $\mathsf{dist}$ is replaced with the $\ell_1$ distance function.)
Clearly, the desired map $\phi$ exists iff $R(\mathsf{rt}) \neq \emptyset$.
Furthermore, given $R(s)$ for all $s \in S$, if $R(\mathsf{rt}) \neq \emptyset$, then one can easily obtain the map $\phi$ in a top-down manner as follows.
Arbitrarily pick a point in $R(\mathsf{rt})$ as $\phi(\mathsf{rt})$.
For a non-root node $s \in S$, suppose $\phi(s') \in R(s')$ is already determined for the parent $s'$ of $s$ and we now want to determine $\phi(s)$.
Observe that there exists a point $x \in R(s)$ such that $\mathsf{dist}(\phi(s'),x) \leq 1$.
Indeed, as $\phi(s') \in R(s')$, there exists a map $\phi_{s'}: S \cap \mathcal{T}_t \rightarrow \mathbb{R}^2$ satisfying $d_{\phi_{s'}}(t,t') \leq 1$ for all $(t,t') \in E(\mathcal{T}_{s'})$ and $\phi_{s'}(s') = \phi(s')$.
Set $x = \phi_{s'}(s)$.
By definition, $x \in R(s)$ and $\mathsf{dist}(\phi(s'),x) \leq 1$.
We then define $\phi(s) = x$.
In this way, we can construct the entire map $\phi: S \rightarrow \mathbb{R}^2$ from top to bottom, which satisfies the desired property.

Bae et al. \cite{bae2011exact} showed that for the $\ell_1$ case, $R(s)$ is a rectilinear domain with complexity $n^{O(1)}$ for every $s \in S$, and all these regions can be computed in $n^{O(1)}$ time in a bottom-up fashion.
This directly yields a polynomial-time algorithm to solve the problem with the $\ell_1$ metric.
Unfortunately, in the Euclidean setting, the feasible regions are much more complicated and do not have such nice structures.
As such, we cannot solve the problem by directly computing the feasible regions.
Instead, our algorithm will first guess approximately the locations of the images $\phi(s)$ for $s \in S$ (as well as the locations of the points in $C_1,\dots,C_K$ that connect to the Steiner points) and then defining feasible regions with respect to these approximate locations.
In this way, we can guarantee that the feasible regions are somehow well-behaved, which finally allows us to bound their complexity by exploiting certain geometric properties of the regions as well as techniques from combinatorial geometry.

\subsection{Approximately guessing the locations} \label{sec-guess}
Let $\varGamma$ be a grid in the plane in which each cell is a closed square of side-length $\varepsilon$, where $\varepsilon > 0$ is a sufficiently small constant. For example, one can take $\varepsilon=0.1$. 
For convenience, we also use $\varGamma$ to denote the set of all cells of the grid.
Our algorithm first guesses, for each Steiner point $s \in S$, which cell of $\varGamma$ contains the image $\phi(s)$, and for each terminal node $C_i$, which cell of $\varGamma$ contains the point in $C_i$ within distance 1 from $\phi(s_i)$, where $s_i \in S$ is the parent of $C_i$ in $\mathcal{T}_\mathsf{top}$.
At the first glance, the number of guesses needed is at least $n^{O(k)}$, which we cannot afford.
However, by applying the nice property of $\phi$ in Lemma~\ref{lem-propphi}, we can see that only $(k/\varepsilon)^{O(k)} \cdot n=k^{O(k)} \cdot n$ guesses are sufficient.
Formally, we say a map $\xi: \mathcal{T}_\mathsf{top} \rightarrow \varGamma$ \textit{respects} a map $\phi: S \rightarrow \mathbb{R}^2$ if $\phi(s) \in \xi(s)$ for all $s \in S$ and there exists $x \in C_i \cap \xi(C_i)$ such that $\mathsf{dist}(\phi(s_i),x) \leq 1$ for all $i \in [K]$, where $s_i \in S$ is the parent of $C_i$ in $\mathcal{T}_\mathsf{top}$.
We have the following observation.
\begin{lemma}
There exist $r = k^{O(k)} \cdot n$ maps $\xi_1,\dots,\xi_r: \mathcal{T}_\mathsf{top} \rightarrow \varGamma$ such that if there exists a map $\phi: S \rightarrow \mathbb{R}^2$ satisfying $d_\phi(t,t') \leq 1$ for all $(t,t') \in E(\mathcal{T}_\mathsf{top})$, then there also exists such a map $\phi$ which in addition is respected by $\xi_i$ for some $i \in [r]$.
Furthermore, the maps $\xi_1,\dots,\xi_r$ can be computed in $O(r)$ time given $P$ and $\mathcal{T}_\mathsf{top}$.
\end{lemma}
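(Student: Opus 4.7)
The plan is to use Lemma~\ref{lem-propphi} as a bounded-diameter statement: once one relevant point of a valid solution is pinned down approximately, Lemma~\ref{lem-propphi} forces all other relevant points into a small disk around it, and hence into only $O(k^2)$ cells of $\varGamma$. Iterating over a single terminal point as an \emph{anchor} then contributes only the linear factor $n$ to $r$, rather than a factor of $n$ per node (which would give the unaffordable $n^{O(k)}$).

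More concretely, fix any leaf $C_1$ of $\mathcal{T}_\mathsf{top}$ (which exists and is a terminal node by our reduction), and let $s_1 \in S$ be its parent. Given a hypothetical valid map $\phi$, the definition of $d_\phi$ on the edge $(s_1, C_1)$ yields some $x_1 \in C_1$ with $\mathsf{dist}(\phi(s_1), x_1) \le 1$. By Lemma~\ref{lem-propphi} applied to any $s, s_1 \in S$, $\mathsf{dist}(\phi(s), \phi(s_1)) \le k-1$, so the triangle inequality gives $\mathsf{dist}(\phi(s), x_1) \le k$ for every $s \in S$. Similarly, every ``connection point'' $x_i \in C_i$ (with $\mathsf{dist}(\phi(s_i), x_i) \le 1$) satisfies $\mathsf{dist}(x_i, x_1) \le k+1$. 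Hence every cell of $\varGamma$ that could possibly be assigned to a node of $\mathcal{T}_\mathsf{top}$ must lie within distance $k+1$ of $x_1$, and there are only $N = O((k/\varepsilon)^2) = O(k^2)$ such cells.

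The enumeration is now direct. Iterate $p_{\mathrm{ref}} \in C_1$ (at most $n$ choices, one of which equals $x_1$ in the scenario above), and for each $p_{\mathrm{ref}}$, independently assign to each of the $|S| + K = O(k)$ nodes of $\mathcal{T}_\mathsf{top}$ one of the $N$ cells within distance $k+1$ of $p_{\mathrm{ref}}$. This produces $n \cdot N^{O(k)} = k^{O(k)} \cdot n = r$ candidate maps $\xi_1,\dots,\xi_r$, each a function of size $O(k)$ generated incrementally during the enumeration, for a total running time of $O(r)$.

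Correctness then reduces to bookkeeping: given any valid $\phi$, define $\xi$ by choosing, for each $s \in S$, any cell of $\varGamma$ that contains $\phi(s)$, and for each $i \in [K]$, any cell of $\varGamma$ that contains the corresponding $x_i$ (breaking ties arbitrarily when a point sits on the boundary of several closed cells). By construction $\xi$ respects $\phi$, and by the distance bounds above it does appear in our enumeration, namely for the choice $p_{\mathrm{ref}} = x_1$. The only real conceptual step is to notice that Lemma~\ref{lem-propphi} is strong enough to trap the entire solution in an $O(k)$-radius window around a single data point, which is what lets us replace the naive $n^{O(k)}$ bound by $n \cdot k^{O(k)}$; no further geometric ideas are needed for this reduction.
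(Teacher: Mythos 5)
Your proof is correct and follows essentially the same approach as the paper: anchor the enumeration on a single terminal leaf $C_1$ (iterating over its $\le n$ points, where the paper instead iterates over the $\le n$ cells meeting $C_1$), then invoke Lemma~\ref{lem-propphi} to confine every remaining $\phi$-image and connection point to an $O(k)$-radius disk, so each of the $O(k)$ remaining nodes has only $O((k/\varepsilon)^2)$ candidate cells. The only cosmetic difference is the choice of reference point ($x_1$ vs.\ the center $o$ of $\xi(C_1)$) and the resulting constant in the radius bound; both yield $r = k^{O(k)} \cdot n$ and $O(r)$ computation time.
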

\begin{proof}
Let $\phi: S \rightarrow \mathbb{R}^2$ be a map satisfying $d_\phi(t,t') \leq 1$ for all $(t,t') \in E(\mathcal{T}_\mathsf{top})$.
Consider the terminal node $C_1 \in \mathcal{T}_\mathsf{top}$.
If $\xi: \mathcal{T}_\mathsf{top} \rightarrow \varGamma$ respects $\phi$, then the cell $\xi(C_1)$ must contain at least one point in $C_1$.
As $|C_1| \leq |P| = n$, there are at most $n$ choices for $\xi(C_1)$.
Suppose now $\xi(C_1) \in \varGamma$ is determined.
Let $o$ be the center of $\xi(C_1)$.
In order to let $\xi$ respect $\phi$, for the parent $s$ of $C_1$ in $\mathcal{T}_\mathsf{top}$, $\phi(s)$ must be within distance $1+\varepsilon$ from $o$.
Furthermore, by Lemma~\ref{lem-propphi}, for any $s'\in S$, 
\begin{equation*}
    \mathsf{dist}(o,\phi(s')) \leq \mathsf{dist}(o,\phi(s)) + \mathsf{dist}(\phi(s),\phi(s')) \leq 1+\varepsilon+(k-1)=k+\varepsilon.
\end{equation*}
Therefore, the $\phi$-images of the Steiner points all lie in the $O((k/\varepsilon)^2)$ cells around $\xi(C_1)$.
It follows that for every Steiner point $s' \in S$, there are $O((k/\varepsilon)^2)$ choices for $\xi(s')$.
Also, for every terminal node $C_i$, the cell $\xi(C_i)$ must be within distance $O(k)$ from $\xi(C_1)$, and thus there are  $O((k/\varepsilon)^2)$ choices for $\xi(C_i)$.
The total number of possibilities of $\xi$ is then $(k/\varepsilon)^{O(K+k)} \cdot n$, which is $k^{O(k)} \cdot n$ since $1/\varepsilon = O(1)$ and $K = O(k)$.
The argument above directly gives an algorithm for computing all possible maps $\xi$ in $k^{O(k)} \cdot n$ time.
\end{proof}

By computing and trying the maps $\xi_1,\dots,\xi_r$ in the above lemma, we can guess a map $\xi: \mathcal{T}_\mathsf{top} \rightarrow \varGamma$ which respects the desired map $\phi$.
The guess of $\xi$ only results in an extra $r$ factor in the running time, which is affordable as $r = k^{O(k)} \cdot n$.
Our problem now becomes computing a map $\phi: S \rightarrow \mathbb{R}^2$ respected by $\xi$ which satisfies $d_\phi(t,t') \leq 1$ for all $(t,t') \in E(\mathcal{T}_\mathsf{top})$.

\subsection{Defining feasible regions} \label{sec-feasible}
Recall the notion of feasible regions discussed before Section~\ref{sec-guess}.
Now we slightly change the definition of feasible regions by defining them with respect to the map $\xi: \mathcal{T}_\mathsf{top} \rightarrow \varGamma$.
For each Steiner point $s \in S$, the \textit{feasible region} $R(s) \subseteq \xi(s)$ of $s$ is the set of all points $x \in \xi(s)$ such that there exists a map $\phi_s: S \cap \mathcal{T}_s \rightarrow \mathbb{R}^2$ respected by $\xi_{|\mathcal{T}_s}$ which satisfies $d_{\phi_s}(t,t') \leq 1$ for all $(t,t') \in E(\mathcal{T}_s)$ and $\phi_s(s) = x$.
For convenience, we also define the feasible region of each terminal node $C_i$ of $\mathcal{T}_\mathsf{top}$ by setting $R(C_i) = C_i \cap \xi(C_i)$.
It is clear that the desired map $\phi$ exists iff $R(\mathsf{rt}) \neq \emptyset$.
Furthermore, given $R(s)$ for all $s \in S$, we can easily recover the map $\phi$, as discussed before Section~\ref{sec-guess}.
So it suffices to show how to compute the feasible regions efficiently.
To this end, we need to first understand how the feasible regions look like.

Recall that the \textit{Minkowski sum} of two regions $A$ and $B$ in $\mathbb{R}^2$ is defined as $A \oplus B = \{a+b: a \in A \text{ and } b \in B\}$.
Let $D = \{(x,y) \in \mathbb{R}^2: x^2 + y^2 \leq 1\}$ be the unit disk centered at the origin.
For each $s \in S$, denote by $\mathsf{Ch}(s)$ the set of children of $s$ in $\mathcal{T}_\mathsf{top}$.
We have the following simple observation.

\begin{lemma} \label{lem-R(s)}
$R(s) = (\bigcap_{s' \in \mathsf{Ch}(s)} (R(s') \oplus D)) \cap \xi(s)$ for all $s \in S$.
\end{lemma}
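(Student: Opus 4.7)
The plan is to prove the identity by showing containment in both directions, with the observation that the construction of feasible regions is naturally compositional over the subtrees rooted at the children of $s$. The key point is that $\mathsf{dist}(x,y)\le 1$ is equivalent to $x \in \{y\}\oplus D$, so the disk-Minkowski sum precisely encodes the edge-length constraint between $s$ and each child.

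\textbf{Forward direction ($\subseteq$).} Suppose $x\in R(s)$, witnessed by a map $\phi_s:S\cap \mathcal{T}_s \to \mathbb{R}^2$ respected by $\xi_{|\mathcal{T}_s}$ with $\phi_s(s)=x$ and $d_{\phi_s}(t,t')\le 1$ for every edge of $\mathcal{T}_s$. Fix a child $s'\in \mathsf{Ch}(s)$. If $s'\in S$, restrict $\phi_s$ to $S\cap \mathcal{T}_{s'}$: this restriction is respected by $\xi_{|\mathcal{T}_{s'}}$, satisfies the edge constraints on $\mathcal{T}_{s'}$, and maps $s'$ to $\phi_s(s')$, so $\phi_s(s')\in R(s')$; since the edge $(s,s')$ gives $\mathsf{dist}(x,\phi_s(s'))\le 1$, we get $x\in R(s')\oplus D$. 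If $s'$ is a terminal $C_i$, the respecting condition for $\xi$ at $C_i$ yields a point $y\in C_i\cap \xi(C_i)=R(C_i)$ with $\mathsf{dist}(x,y)\le 1$, so again $x\in R(C_i)\oplus D$. Combined with $x\in \xi(s)$ this establishes the inclusion.

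\textbf{Reverse direction ($\supseteq$).} Suppose $x\in \xi(s)$ and $x\in R(s')\oplus D$ for every $s'\in \mathsf{Ch}(s)$. For each Steiner child $s'\in \mathsf{Ch}(s)\cap S$, pick $y_{s'}\in R(s')$ with $\mathsf{dist}(x,y_{s'})\le 1$ and let $\phi_{s'}:S\cap \mathcal{T}_{s'}\to \mathbb{R}^2$ be the witness map for $y_{s'}\in R(s')$ (respected by $\xi_{|\mathcal{T}_{s'}}$, satisfying edge constraints in $\mathcal{T}_{s'}$, mapping $s'$ to $y_{s'}$). For each terminal child $C_i\in \mathsf{Ch}(s)$, pick $y_{C_i}\in R(C_i)=C_i\cap \xi(C_i)$ with $\mathsf{dist}(x,y_{C_i})\le 1$. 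Now define $\phi_s:S\cap \mathcal{T}_s\to \mathbb{R}^2$ by $\phi_s(s)=x$ and $\phi_s|_{S\cap \mathcal{T}_{s'}}=\phi_{s'}$ for each Steiner child $s'$. This is well-defined since the subtrees $\mathcal{T}_{s'}$ are pairwise disjoint and share no Steiner points with $\{s\}$. The edge $(s,s')$ for a Steiner child satisfies $\mathsf{dist}(\phi_s(s),\phi_s(s'))=\mathsf{dist}(x,y_{s'})\le 1$, the edge $(s,C_i)$ for a terminal child is witnessed by $y_{C_i}\in C_i$ with $\mathsf{dist}(x,y_{C_i})\le 1$ (which simultaneously verifies the respecting condition at $C_i$), and all other edge constraints hold because they lie entirely within some $\mathcal{T}_{s'}$ where $\phi_{s'}$ already satisfies them. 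Hence $x\in R(s)$.

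\textbf{Remarks on obstacles.} There is no real obstacle here; the main care is just to (i) treat the two types of children (Steiner and terminal) uniformly via the definition $R(C_i)=C_i\cap \xi(C_i)$, and (ii) verify that gluing the witness maps from disjoint subtrees preserves the respecting condition—the latter is immediate because the respecting condition at each terminal $C_i$ only involves the edge from $C_i$ to its unique parent, which is handled by the chosen point $y_{C_i}$.
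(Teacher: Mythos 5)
Your proof is correct and follows essentially the same two-inclusion argument as the paper: restricting the witness map to child subtrees for the forward direction, and gluing witness maps from the children for the reverse. You are slightly more explicit than the paper in checking the terminal-child case of the reverse direction, but the approach and all key steps coincide.
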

\begin{proof}
To see $R(s) \subseteq (\bigcap_{s' \in \mathsf{Ch}(s)} (R(s') \oplus D)) \cap \xi(s)$, it suffices to show $R(s) \subseteq R(s') \oplus D$ for all $s' \in \mathsf{Ch}(s)$, because $R(s) \subseteq \xi(s)$ by definition.
Let $x \in R(s)$.
There exists $\phi_s: S \cap \mathcal{T}_s \rightarrow \mathbb{R}^2$ respected by $\xi_{|\mathcal{T}_s}$ which satisfies $d_{\phi_s}(t,t') \leq 1$ for all $(t,t') \in E(\mathcal{T}_s)$ and $\phi_s(s) = x$.
Consider a child $s' \in \mathsf{Ch}(s)$.
If $s' \in S$, then $\phi_s(s') \in R(s')$, because of the map $\phi_{s'}$ obtained by restricting $\phi_s$ to $S \cap \mathcal{T}_{s'}$.
We have $\mathsf{dist}(x,\phi_s(s')) = d_{\phi_s}(s,s') \leq 1$, which implies $x \in R(s') \oplus D$.
If $s' = C_i$ for some $i \in [K]$, then there exists a point $y \in C_i \cap \xi(C_i)$ such that $\mathsf{dist}(x,y) = \mathsf{dist}(\phi_s(s),y) \leq 1$, which also implies $x \in R(s') \oplus D$ since $R(s') = C_i \cap \xi(C_i)$.

To see $R(s) \supseteq (\bigcap_{s' \in \mathsf{Ch}(s)} (R(s') \oplus D)) \cap \xi(s)$, consider a point $x \in (\bigcap_{s' \in \mathsf{Ch}(s)} (R(s') \oplus D)) \cap \xi(s)$.
We construct a map $\phi_s: S \cap \mathcal{T}_s \rightarrow \mathbb{R}^2$ as follows.
Set $\phi_s(s) = x$.
For every $s' \in \mathsf{Ch}(s) \cap S$, there exists a point $y \in R(s')$ such that $\mathsf{dist}(x,y) \leq 1$.
As $y \in R(s')$, there exists $\phi_{s'}: S \cap \mathcal{T}_{s'} \rightarrow \mathbb{R}^2$ respected by $\xi_{|\mathcal{T}_{s'}}$ which satisfies $d_{\phi_{s'}}(t,t') \leq 1$ for all $(t,t') \in E(\mathcal{T}_{s'})$ and $\phi_{s'}(s') = y$.
We then set $\phi_s(t) = \phi_{s'}(t)$ for all $t \in S \cap \mathcal{T}_{s'}$.
It is easy to check from the construction that $\xi_{|\mathcal{T}_s}$ respects $\phi_s$ and $d_{\phi_s}(t,t') \leq 1$ for all $(t,t') \in E(\mathcal{T}_s)$.
Thus, $x \in R(s)$.
\end{proof}

\begin{figure}[!ht]
		\begin{center}
			\includegraphics[width=.45\textwidth]{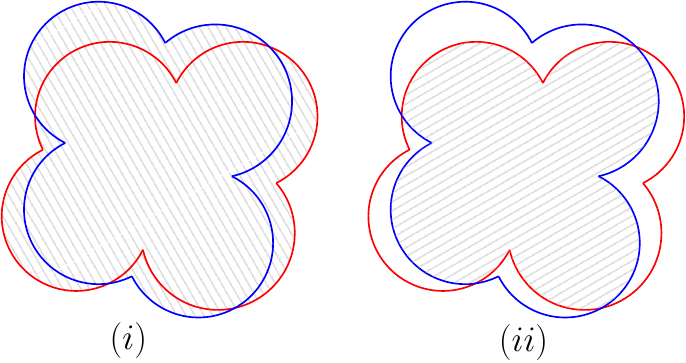}
			\caption{Illustrating (i) the union and (ii) the intersection of two pseudo-convex circular domains using shaded regions.} 
			\label{fig:1}
		\end{center}
\end{figure}

A \textit{circular arc} refers to a curve that is a connected portion of a circle in $\mathbb{R}^2$.
A segment is also considered as a circular arc (which is a connected portion of an infinitely large circle).
A \textit{circular domain} is a closed subset (or region) of $\mathbb{R}^2$ whose boundary consists of circular arcs that can only intersect at their endpoints. 
Circular domains can be viewed as a generalization of polygons whose boundaries consist of segments and are not self-intersecting. \textit{For our convenience, by abusing notation, we also allow any finite set of points to be a circular domain.}
The \textit{complexity} of any given circular domain $R$, denoted by $\lVert R \rVert$, is defined as follows depending on the structure of $R$. If $R$ is not just a set of points, $\lVert R \rVert$ is the total number of circular arcs and vertices (i.e., the intersection points of the arcs) on its boundary. Otherwise, if $R$ is a set of points, $\lVert R \rVert$ is just the number of these points. 

A circular domain $R$ is called \textit{pseudo-convex} if for every boundary arc $\sigma$ of $R$ that is not a segment, the side of $\sigma$ corresponding to the interior of $R$ coincides with the side of $\sigma$ corresponding to the interior of the disk defining $\sigma$.
Clearly, the union and the intersection of pseudo-convex circular domains are still pseudo-convex circular domains (see Figure \ref{fig:1} for examples).
Furthermore, one can verify that if $R$ is a pseudo-convex circular domain, then the Minkowski sum $R \oplus D$ is also a pseudo-convex circular domain (see Figure \ref{fig:2} for examples).
Therefore, by Lemma~\ref{lem-R(s)}, one can see that $R(s)$ is a pseudo-convex circular domain for every $s \in S$.
For each terminal node $C_i$, $R(C_i)$ is a set of discrete points, which we also view as a pseudo-convex circular domain (with only vertices and no arcs) for convenience.
Using the formula in Lemma~\ref{lem-R(s)} for $R(s)$, one can then compute $R(s)$ in time polynomial in $\sum_{s' \in \mathsf{Ch}(s)} \lVert R(s') \rVert$.
Indeed, each Minkowski sum $R(s') \oplus D$ can be obtained by computing the Minkowski sum of every boundary arc of $R(s')$ and $D$, while the intersection of circular domains can be computed by constructing the arrangement of their boundary arcs. As the center of the circle corresponding to any such arc can be computed in an inductive manner, each arc can be expressed using $O(1)$ complexity. Hence, the intersection points of two such arcs can also be computed in constant time. 
Next, we bound the complexity of the feasible regions.

\begin{figure}[!ht]
		\begin{center}
			\includegraphics[width=.55\textwidth]{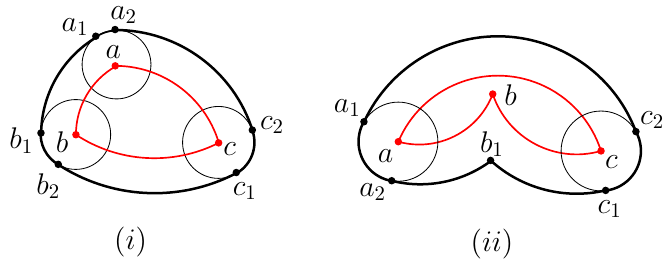}
			\caption{Illustrating the Minkowski sums of pseudo-convex circular domain $\wideparen{ab},\wideparen{bc},\wideparen{ca}$ and the unit disk $D$. (i) The sum is shown by $\wideparen{a_1b_1}$, $\wideparen{b_1b_2}$, $\wideparen{b_2c_1}$, $\wideparen{c_1c_2}$, $\wideparen{c_2a_2}$, and $\wideparen{a_2a_1}$.  (ii) The sum is shown by $\wideparen{a_1a_2}$, $\wideparen{a_2b_1}$, $\wideparen{b_1c_1}$, $\wideparen{c_1c_2}$, and $\wideparen{c_2a_1}$.} 
			\label{fig:2}
		\end{center}
\end{figure}

\subsection{The complexity of feasible regions} \label{sec-complex}
In this section, we  bound $\lVert R(s) \rVert$ for the Steiner points $s \in S$.
Define the \textit{level} $L(t)$ of a node $t \in \mathcal{T}_\mathsf{top}$ by setting $L(t) = 0$ if $t$ is a leaf and $L(t)=\max_{t' \in \mathsf{Ch}(t)} L(t')+1$ otherwise.
Clearly, $L(s) \geq 1$ for all $s \in S$ and $L(C_i) = 0$ for all $i \in [K]$.
As $|S| \leq k$, the maximum level of a node in $\mathcal{T}_\mathsf{top}$ is bounded by $k$.
By Lemma~\ref{lem-R(s)}, the feasible region of $s \in S$ depends on the feasible regions of its children whose levels are strictly smaller than $L(s)$.
Thus, it is natural to study how the complexity of feasible regions increases along with the level of the nodes.

The nodes of $\mathcal{T}_\mathsf{top}$ at level $0$ are exactly $C_1,\dots,C_K$.
We have $\lVert R(C_i) \rVert = O(n)$ for every $i \in [K]$, as $R(C_i) = C_i \cap \xi(C_i) \subseteq P$ consists of at most $n$ points in the plane.
Assume $\lVert R(t) \rVert \leq m_i$ for all nodes $t \in \mathcal{T}_\mathsf{top}$ with $L(t) \leq i$.
Consider a node $s \in S$ with $L(s) = i+1$.
Now $\lVert R(s') \rVert \leq m_i$ for all $s' \in \mathsf{Ch}(s)$, and we want to bound $\lVert R(s) \rVert$.
As $R(s) = (\bigcap_{s' \in \mathsf{Ch}(s)} (R(s') \oplus D)) \cap \xi(s)$ by Lemma~\ref{lem-R(s)}, the first step is to bound $\lVert R(s') \oplus D \rVert$ for $s' \in \mathsf{Ch}(s)$.
The complexity of Minkowski sums of polygons is well-studied.
Surprisingly, there is not much work in the literature studying Minkowski sums of circular domains.
However, it is not surprising that the insights for Minkowski sums of polygons can also be applied to circular domains.
We show in the following lemma that the known arguments for polygons together with the standard technique of vertical decomposition result in, at least, a linear bound on the complexity of the Minkowski sum of a pseudo-convex circular domain and a unit disk (which is sufficient for our purpose).


\begin{lemma} \label{lem-minkowski}
For any pseudo-convex circular domain $R$, we have $\lVert R \oplus D \rVert = O(\lVert R \rVert)$.
\end{lemma}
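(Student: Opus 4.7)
The plan is to prove the bound via a vertical decomposition of $R$ combined with the Kedem--Livne--Pach--Sharir (KLPS) union-complexity bound for pseudo-discs.

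First, I vertically decompose $R$ by drawing a maximal vertical chord (inside $R$) through each vertex of $\partial R$ and through each point of $\partial R$ where the tangent is vertical. Since $R$ has $O(\lVert R \rVert)$ vertices and each boundary arc has at most two points of vertical tangency, this yields $O(\lVert R \rVert)$ chords and therefore $m=O(\lVert R \rVert)$ elementary cells $P_1,\dots,P_m$. By construction, each $P_i$ is bounded above by at most one $x$-monotone arc $\sigma_{\mathrm{top}}$ from $\partial R$, below by at most one $x$-monotone arc $\sigma_{\mathrm{bot}}$, and on the sides by two vertical segments; hence $\lVert P_i\rVert = O(1)$.

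Next I show that each cell $P_i$ is convex. Writing $\sigma_{\mathrm{top}}$ as the graph of a function $y=f_{\mathrm{top}}(x)$ on an interval $[x_l,x_r]$ (well-defined by $x$-monotonicity), pseudo-convexity of $R$ implies that the interior of the disk $D_{\mathrm{top}}$ defining $\sigma_{\mathrm{top}}$ lies on the side of $\sigma_{\mathrm{top}}$ containing $R$'s interior, which is below $\sigma_{\mathrm{top}}$. Hence the center of $D_{\mathrm{top}}$ lies below $\sigma_{\mathrm{top}}$, so $\sigma_{\mathrm{top}}$ is the upper arc of a circle and $f_{\mathrm{top}}$ is concave. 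Symmetrically, $f_{\mathrm{bot}}$ is convex. Consequently $P_i=\{(x,y):x_l\le x\le x_r,\ f_{\mathrm{bot}}(x)\le y\le f_{\mathrm{top}}(x)\}$ is convex.

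Since Minkowski sum distributes over union, $R\oplus D=\bigcup_{i=1}^{m}(P_i\oplus D)$. Each $P_i\oplus D$ is the Minkowski sum of a bounded convex region with a disk, hence itself bounded and convex, and $\lVert P_i\oplus D\rVert = O(1)$ (each boundary arc/segment of $\partial P_i$ contributes one offset arc/segment, and each vertex of $\partial P_i$ spawns a single arc of $D$). Thus $\{P_i\oplus D\}_{i=1}^m$ is a collection of $m=O(\lVert R\rVert)$ bounded convex planar regions of constant complexity each. Since the boundaries of any two bounded planar convex regions cross transversally in at most two points, this is a family of pseudo-discs, and the KLPS theorem bounds the combinatorial complexity of their union by $O(m)$. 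As $\partial(R\oplus D)\subseteq\bigcup_i\partial(P_i\oplus D)$, we conclude $\lVert R\oplus D\rVert = O(m) = O(\lVert R\rVert)$.

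The main obstacle is the cell-convexity step: pseudo-convexity is used essentially to force the top arc to be concave-as-a-function and the bottom arc to be convex-as-a-function, so that each cell is the region between a concave graph and a convex graph on a common interval. Without this hypothesis the cells could be non-convex and the pseudo-disc condition would fail, invalidating the KLPS application; once convexity is in hand, the rest is a routine invocation of standard combinatorial-geometry machinery.
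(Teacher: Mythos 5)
Your overall strategy -- vertical decomposition into convex cells, distributing Minkowski sum over the union, and applying the Kedem--Livne--Pach--Sharir linear union bound for pseudo-disks -- is exactly the approach of the paper, and most of your steps are sound (including the argument that pseudo-convexity forces the top arc to be concave-as-a-function and the bottom arc to be convex-as-a-function, which is a slightly more explicit version of the paper's ``all interior angles are at most $\pi$'' argument).

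There is, however, one genuine gap. You justify the pseudo-disk property by claiming that ``the boundaries of any two bounded planar convex regions cross transversally in at most two points.'' This is false. A disk and a thin rectangle whose two long sides pass through the disk cross at four points; two ellipses with a common center but different eccentricities cross at four points; more generally two convex $n$-gons can cross $\Theta(n)$ times. Convexity of the two regions, and even constant per-region complexity, are not by themselves enough to guarantee the at-most-two-crossings property, and without it the KLPS union bound does not apply (for constant-complexity curves with $O(1)$ pairwise crossings the union complexity can be superlinear). What actually saves the argument -- and what the paper explicitly invokes from Kedem et al. -- is that the cells $P_i$ and $P_j$ are \emph{interior-disjoint} convex regions, and the theorem states that if $P$ and $Q$ are interior-disjoint convex and $Z$ is convex, then $P\oplus Z$ and $Q\oplus Z$ are pseudo-disks. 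Your decomposition does produce interior-disjoint cells, so the conclusion you want is true, but the reason you give for it is incorrect; you should replace the false general claim with the specific KLPS result about Minkowski sums of interior-disjoint convex regions.
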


\begin{proof}
    If $R$ is convex, then one can directly apply the argument for bounding the complexity of the Minkowski sum of two convex polygons (see for example \cite{berg1997computational}) to show that $\lVert R \oplus D \rVert = O(\lVert R \rVert + \lVert D \rVert)$, which is just $O(\lVert R \rVert)$.
In fact, in this case, one can even show that each vertex/arc of $R$ contributes at most one arc of $R \oplus D$, and thus $\lVert R \oplus D \rVert \leq 2 \lVert R \rVert$.

When $R$ is not convex (but pseudo-convex), we shall apply the known argument for bounding the complexity of the Minkowski sum of a non-convex polygon and a convex polygon \cite{kedem1986union}.
Consider the (standard) vertical decomposition of $R$.
Specifically, for each vertex $v$ on the boundary of $R$, we shoot two rays originated from $v$ directing upward and downward, respectively; the rays stop when they touch the boundary of $R$.
Those rays cut $R$ into smaller circular domains $R_1,\dots,R_r$ with total complexity $O(\lVert R \rVert)$.
See Figure~\ref{fig-vertical} for an illustration of this vertical decomposition of $R$.
It is easy to observe that each of these smaller circular domains is convex.
Indeed, each $R_i$ is pseudo-convex, as $R$ is pseudo-convex.
Furthermore, the angle of $R_i$ at every vertex $v$ is at most $\pi$ (here the angle is defined by the two arcs of $R_i$ incident to $v$ with the side corresponding to the interior of $R_i$), for otherwise the angle is cut by one of the two rays originated from $v$ and thus cannot survive in $R_i$.
Therefore, $R_i$ is convex.
It was shown in \cite{kedem1986union} that if $P$ and $Q$ are two interior-disjoint convex regions (in $\mathbb{R}^2$), and $Z$ is another convex region, then $P \oplus Z$ and $Q \oplus Z$ are pseudo-disks, i.e., their boundary cross each other at most twice.
Now $R_1,\dots,R_r$ are interior-disjoint and convex, which implies $R_1 \oplus D,\dots,R_r \oplus D$ are pseudo-disks.
Note that $R \oplus D = \bigcup_{i=1}^r (R_i \oplus D)$.
By the linear union complexity of pseudo-disks~\cite{kedem1986union}, we have $\lVert \bigcup_{i=1}^r (R_i \oplus D) \rVert = O(\sum_{i=1}^r \lVert R_i \oplus D \rVert)$, and the latter is $O(\sum_{i=1}^r \lVert R_i \rVert)$ because each $R_i$ is convex.
Finally, as the total complexity of $R_1,\dots,R_r$ is $O(\lVert R \rVert)$, we have $\lVert R \oplus D \rVert = O(\lVert R \rVert)$.
\end{proof}

\begin{figure}
    \centering
    \includegraphics[height=4.7cm]{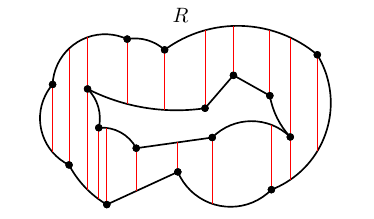}
    \caption{Illustrating the vertical decomposition of $R$.}
    \label{fig-vertical}
\end{figure}

Applying the above lemma, we have $\lVert R(s') \oplus D \rVert = O(m_i)$ for all $s' \in \mathsf{Ch}(s)$.
Next, we are going to bound the complexity of the intersection $\bigcap_{s' \in \mathsf{Ch}(s)} (R(s') \oplus D)$.
This step is more challenging, and is achieved by proving the following key lemma.

\begin{lemma} \label{lem-key}
Let $R_1,\dots,R_r$ be circular domains each of which is inside an $\varepsilon \times \varepsilon$ grid cell for $\varepsilon \leq 0.1$.
Then $\lVert \bigcap_{i=1}^r (R_i \oplus D) \rVert = O(r^2 M)$, where $M = \max_{i=1}^r \lVert R_i \oplus D \rVert$.
\end{lemma}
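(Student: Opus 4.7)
The plan is to bound $\lVert \Omega \rVert$ where $\Omega := \bigcap_{i=1}^r (R_i \oplus D)$ by analyzing pairwise interactions between the boundary curves $\gamma_i := \partial(R_i \oplus D)$. Each $R_i \oplus D$ is pseudo-convex (as established in Section~\ref{sec-feasible}), so $\Omega$ is also pseudo-convex. Its boundary $\partial \Omega$ consists of sub-arcs of the $\gamma_i$'s, whose endpoints are either original vertices of the $\gamma_i$'s (contributing at most $O(rM)$ in total) or transverse crossing points of pairs $(\gamma_i,\gamma_j)$ that happen to lie on $\partial \Omega$. Hence $\lVert \Omega \rVert = O(rM + X)$, where $X$ counts such crossings, so it suffices to show $X = O(r^2 M)$.

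The crux is to show, for each pair $i\neq j$, that $\gamma_i$ and $\gamma_j$ cross in at most $O(M)$ points. Here the small-cell assumption is essential. Since $R_i$ sits inside an $\varepsilon\times\varepsilon$ cell with $\varepsilon\leq 0.1$, the region $R_i \oplus D$ is sandwiched between the disks $D(c_i, 1-\varepsilon\sqrt{2}/2)$ and $D(c_i, 1+\varepsilon\sqrt{2}/2)$, where $c_i$ is the center of the cell; in particular, $\gamma_i$ lies in a thin annulus around $c_i$ and $R_i \oplus D$ is star-shaped with respect to $c_i$. An analogous statement holds for $\gamma_j$.

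I would reduce bounding the pairwise crossings to a statement about the \emph{bisector} $B_{ij} := \{y\in\mathbb{R}^2 : \mathsf{dist}(y,R_i) = \mathsf{dist}(y,R_j)\}$. Every $x\in\gamma_i\cap\gamma_j$ satisfies $\mathsf{dist}(x,R_i) = \mathsf{dist}(x,R_j) = 1$, so $x$ lies on $B_{ij}$. The bisector decomposes into smooth pieces, each labeled by a pair of ``active'' features of $\partial R_i$ and $\partial R_j$ realizing the respective nearest distances. Analogously to the classical $O(m+n)$ bound for bisectors of convex objects (cf.\ additively-weighted Voronoi diagrams), I would argue that the number of such pieces is $O(M)$, and on each piece $\mathsf{dist}(\cdot,R_i)$ is a constant-degree algebraic function of one parameter, so the equation $\mathsf{dist}(\cdot,R_i) = 1$ has $O(1)$ solutions per piece. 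This yields $O(M)$ crossings per pair, and summing over the $\binom{r}{2}$ pairs gives $X = O(r^2 M)$, as required.

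The main obstacle is justifying the $O(M)$ bound on the number of smooth pieces in $B_{ij}$. For convex domains this is classical; for pseudo-convex domains, which may possess reflex vertices, the same conclusion should still hold provided one carefully tracks how active feature pairs can change along the bisector. The star-shaped property of $R_i\oplus D$ around $c_i$ (and similarly for $R_j$), together with the fact that $R_i,R_j$ live in small cells, is what prevents pathological oscillations. If the bisector-based argument proves technically delicate, an alternative is to localize: $\gamma_i$ and $\gamma_j$ can only cross inside the (at most two) ``lens-shaped'' regions where the annular bands around $c_i$ and $c_j$ overlap, and within each lens both curves are so close to their defining unit circles that their crossing count can be controlled by an $O(M)$ charging argument directly against the $O(M)$ arcs.
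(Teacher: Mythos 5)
Your setup is correct and you have isolated exactly the right crux: bound the proper crossings of $\partial(R_i \oplus D)$ and $\partial(R_j \oplus D)$ by $O(M)$ for each pair, then sum over pairs and account for arrangement edges. You also correctly identify the two structural facts that make this possible — each $R_i \oplus D$ is star-shaped with respect to the cell center $c_i$, and it lies in a thin annulus of width $O(\varepsilon)$ around $c_i$. The gap is in the execution of the pairwise bound. The $O(m+n)$ complexity of bisectors is a theorem about \emph{convex} sites; for pseudo-convex circular domains it is false in general (two non-convex sites can have a bisector with a quadratic number of feature changes — this is precisely the same phenomenon that makes two star-shaped circular curves of complexity $M$ cross $\Theta(M^2)$ times in the worst case, which the paper explicitly notes). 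You acknowledge that the bisector step is "the main obstacle" and that "the same conclusion \emph{should still} hold provided one carefully tracks how active feature pairs can change," but that careful tracking is the entire content of the lemma, and it is not supplied. The hand-wave "the star-shaped property... prevents pathological oscillations" does not constitute a proof, and the lens-based fallback is only a restatement of the goal. The paper closes this gap by a concrete monotonicity argument: it shoots $\Lambda = 10$ rays from $c_i$ to decompose $\partial(R_i \oplus D)$ into $\Lambda$ pieces, proves via Observation~\ref{obs-almostperp} that the clockwise tangent along the boundary is within $4\varepsilon$ of being perpendicular to $\overrightarrow{c_i p}$, concludes (Observations~\ref{obs-monotone} and~\ref{obs-monorange}) that each piece is $\vec{v}$-monotone for every $\vec{v}$ in an arc of $\mathbb{S}^1$ of length $\geq \pi/2$, hence that any piece of $\partial(R_i \oplus D)$ and any piece of $\partial(R_j \oplus D)$ are mutually monotone, and then invokes Fact~\ref{fact-monoint} to get $O(M)$ crossings per pair of pieces and hence $O(M)$ overall. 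This tangent-based monotone decomposition is the missing ingredient in your argument; without it (or a rigorous replacement), the claim $X = O(r^2 M)$ is not established.
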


As the proof of Lemma~\ref{lem-key} is technical, we defer it to Section~\ref{sec-proof}.
At this point, let us first finish the discussion assuming this lemma.
By Lemma~\ref{lem-key}, we have $\lVert \bigcap_{s' \in \mathsf{Ch}(s)} (R(s') \oplus D) \rVert = O(r^2 m_i)$, where $r = |\mathsf{Ch}(s)|$.
Recall that each Steiner point in $\mathcal{T}_\mathsf{top}$ is of degree at most 5, which implies $r \leq 5$.
Therefore, $\lVert \bigcap_{s' \in \mathsf{Ch}(s)} (R(s') \oplus D) \rVert = O(m_i)$.
To further bound $\lVert R(s) \rVert$ is easy.
By Lemma~\ref{lem-R(s)}, $R(s)$ is the intersection of $\bigcap_{s' \in \mathsf{Ch}(s)} (R(s') \oplus D)$ and the grid cell $\xi(s)$.
Note that taking intersection with a square can only increase the complexity by a constant factor.
Indeed, $\xi(s)$ has four edges and each edge intersects each boundary arc of $\bigcap_{s' \in \mathsf{Ch}(s)} (R(s') \oplus D)$ at most twice.
Therefore, $\lVert R(s) \rVert = O(m_i)$.
In other words, from level $i$ to level $i+1$, the complexity of the feasible regions only increase by a constant factor.
It follows that $\lVert R(s) \rVert = 2^{O(j)} \cdot n$ for any node $s \in \mathcal{T}_\mathsf{top}$ with $L(s) = j$.
Since $\mathcal{T}_\mathsf{top}$ has at most $k$ levels, we conclude that $\lVert R(s) \rVert = 2^{O(k)} \cdot n$ for all $s \in \mathcal{T}_\mathsf{top}$.

\subsubsection{Proof of Lemma~\ref{lem-key}} \label{sec-proof}

In this section, we prove Lemma~\ref{lem-key}.
Let $R_1,\dots,R_r$ be the circular domains in the lemma.
We prove that each $R_i \oplus D$ is a \textit{simple} circular domain (i.e., connected and without holes).
In fact, we prove an even stronger property of $R_i \oplus D$.

\begin{observation} \label{obs-star}
Let $\Box$ be an $\varepsilon \times \varepsilon$ grid cell and $R \subseteq \Box$ be any set.
For any points $o \in \Box$ and $p \in R \oplus D$, the segments $\overline{op}$ is contained in $R \oplus D$, and furthermore the interior of $\overline{op}$ is contained in the interior of $R \oplus D$.
\end{observation}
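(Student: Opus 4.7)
The plan is to exploit the fact that $R \subseteq \Box$ forces every point of $R$ to lie within distance $\varepsilon\sqrt{2} \le 0.1\sqrt{2} < 1$ of $o$, so that ``morphing'' a witness $p = a+d$ toward $o$ never causes the disk offset to exceed $1$.

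First I would fix arbitrary $o \in \Box$ and $p \in R \oplus D$, and choose a witness pair $a \in R$, $d \in D$ with $p = a+d$ (if $R = \varnothing$ the statement is vacuous). For any $t \in [0,1]$, let $q_t = (1-t)o + tp$ be the point on $\overline{op}$ at parameter $t$. The key algebraic step is to rewrite
\begin{equation*}
q_t - a = (1-t)(o - a) + t(p - a) = (1-t)(o-a) + t\,d.
\end{equation*}
Since both $o$ and $a$ lie in the same $\varepsilon \times \varepsilon$ cell $\Box$, we have $\lVert o - a \rVert \le \varepsilon\sqrt{2}$, and by assumption $\lVert d \rVert \le 1$. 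By the triangle inequality,
\begin{equation*}
\lVert q_t - a \rVert \le (1-t)\varepsilon\sqrt{2} + t \le \max(\varepsilon\sqrt{2},\,1) = 1,
\end{equation*}
so $q_t - a \in D$ and thus $q_t = a + (q_t - a) \in R \oplus D$. This proves the first assertion.

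For the interior claim, observe that for $t \in (0,1)$ the same bound sharpens to a strict inequality: since $\varepsilon\sqrt{2} < 1$, the convex combination $(1-t)\varepsilon\sqrt{2} + t$ is strictly less than $1$ whenever $t < 1$, and equals $\varepsilon\sqrt{2} < 1$ at $t=0$. Hence $\lVert q_t - a \rVert < 1$, meaning $q_t - a$ lies in the \emph{open} unit disk. Then for any $q'$ with $\lVert q' - q_t \rVert < 1 - \lVert q_t - a \rVert$, the triangle inequality yields $\lVert q' - a \rVert < 1$, so $q' = a + (q' - a) \in R \oplus D$. Thus a small open ball around $q_t$ is contained in $R \oplus D$, placing $q_t$ in the interior of $R \oplus D$.

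There is no real obstacle here; the only subtle point is that $R$ is allowed to be an arbitrary set (not necessarily closed, not necessarily a circular domain), which is exactly why the argument must be phrased purely in terms of the witness $a \in R$ rather than appealing to any structural property of $R$. The hypothesis $\varepsilon \le 0.1$ is used only through the strict inequality $\varepsilon\sqrt{2} < 1$, which is what upgrades ``contained in $R\oplus D$'' to ``contained in the interior of $R\oplus D$.''
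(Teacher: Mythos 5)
Your proof is correct and takes essentially the same approach as the paper: the paper picks the same witness (a point $q\in R$ with $\mathsf{dist}(p,q)\le 1$, playing the role of your $a$), observes $o$ lies inside the unit disk $D_q$ centered at $q$ because $o,q$ share a tiny cell, and then invokes convexity of $D_q$ together with $D_q\subseteq R\oplus D$; your parametrization $q_t-a=(1-t)(o-a)+td$ and the bound $\lVert q_t-a\rVert\le(1-t)\varepsilon\sqrt{2}+t\le 1$ is exactly that convexity argument unpacked into an explicit computation, and your interior argument likewise matches the paper's one-line observation that the interior of $\overline{op}$ lies in the interior of $D_q$.
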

\begin{proof}
Since $p \in R \oplus D$, there exists $q \in R \subseteq \Box$ such that $p \in D_q$, where $D_q$ is the unit disk centered at $q$.
As $o,q \in \Box$ and the side-length of $\Box$ is $\varepsilon \leq 0.1$ by the assumption in Lemma~\ref{lem-key}, we have $o \in D_q$.
By the convexity of $D_q$, we have $\overline{op} \subseteq D_q$.
Note that $D_q \subseteq R \oplus D$ because $q \in R$, which implies $\overline{op} \subseteq R \oplus D$.
Furthermore, the interior of $\overline{op}$ is in the interior of $D_q$ and thus in the interior of $R \oplus D$.
\end{proof}

The above observation implies that each $R_i \oplus D$ is star-shaped (i.e., there exists one point $o \in R_i \oplus D$ that sees the entire region $R_i \oplus D$), and in particular, a simple circular domain.
We use $\partial (R_i \oplus D)$ to denote the boundary of $R_i \oplus D$, which is a simple closed circular curve (i.e., curve consisting of circular arcs).
Consider the arrangement $\mathcal{A}$ of all circular curves $\partial (R_1 \oplus D),\dots,\partial (R_r \oplus D)$.
The vertices of $\mathcal{A}$ are the vertices of the curves $\partial (R_1 \oplus D),\dots,\partial (R_r \oplus D)$ and their proper intersection points\footnote{A point $p \in \mathbb{R}^2$ is a \textit{proper intersection point} of two curves $\gamma$ and $\gamma'$ in $\mathbb{R}^2$, if $p \in \gamma \cap \gamma'$ and $p$ is \textit{isolated} in $\gamma \cap \gamma'$, i.e., there is a open neighborhood $U \subseteq \mathbb{R}^2$ of $p$ such that $U \cap (\gamma \cap \gamma') = \{p\}$. It can happen that two curves $\partial (R_i \oplus D)$ and $\partial (R_j \oplus D)$ have infinitely many intersection points when an arc of $\partial (R_i \oplus D)$ overlaps with an arc of $\partial (R_j \oplus D)$. But these are not proper intersection points and do not contribute vertices of $\mathcal{A}$.}.
These vertices subdivide the curves $\partial (R_1 \oplus D),\dots,\partial (R_r \oplus D)$ into smaller pieces each of which is a circular arc; they are the edges of $\mathcal{A}$.
Finally, the curves $\partial (R_1 \oplus D),\dots,\partial (R_r \oplus D)$ subdivide the plane $\mathbb{R}^2$ into connected regions, which are the faces of $\mathcal{A}$.
Clearly, $\bigcap_{i=1}^r (R_i \oplus D)$ is the union of several faces of $\mathcal{A}$.
Therefore, $\lVert \bigcap_{i=1}^r (R_i \oplus D) \rVert$ is bounded by the total number of vertices and edges of $\mathcal{A}$.
We first consider the number of vertices of $\mathcal{A}$ (bounding the number of edges is easy once we know the number of vertices).

The total number of vertices of the curves $\partial (R_1 \oplus D),\dots,\partial (R_r \oplus D)$ is bounded by $O(rM)$ where $M = \max_{i=1}^r \lVert R_i \oplus D \rVert$.
Besides these points, the other vertices of $\mathcal{A}$ are proper intersection points of $\partial (R_1 \oplus D),\dots,\partial (R_r \oplus D)$.
In what follows, we shall prove that the number of proper intersection points of $\partial (R_i \oplus D)$ and $\partial (R_j \oplus D)$ is bounded by $O(M)$, for any $i,j \in [r]$.
Note that this bound does not hold for \textit{general} circular curves.
Indeed, one can easily show that two circular curves of complexity $M$ can have $\Theta(M^2)$ proper intersection points in worst case (even if both curves are boundaries of star-shaped pseudo-convex circular domains).
Our proof heavily relies on the fact that the two curves are both the boundaries of the Minkowski sums of a ``tiny'' circular domain and the unit disk $D$.
More specifically, the geometry of such Minkowski sums allows us to somehow relate the two curves with \textit{monotone} curves, which are well-behaved.
Without loss of generality, it suffices to consider the curves $\partial (R_1 \oplus D)$ and $\partial (R_2 \oplus D)$, i.e., the case $i=1$ and $j=2$.

Recall that a curve $\gamma$ in $\mathbb{R}^2$ is \textit{$x$-monotone} if there exists a homeomorphism $f:[0,1] \rightarrow \gamma$ such that the $x$-coordinate of $f(a)$ is smaller than the $x$-coordinate of $f(b)$ whenever $a < b$.
Intuitively, $\gamma$ is $x$-monotone if we always move right (or left) when going along $\gamma$ from one side to the other side.
By slightly generalizing this notion, we can define the monotonicity of a curve with respect to a vector.
Formally, let $\vec{v} \in \mathbb{S}^1$ be a unit vector.
We say a curve $\gamma$ in $\mathbb{R}^2$ is \textit{$\vec{v}$-monotone} if there exists a homeomorphism $f:[0,1] \rightarrow \gamma$ such that $\langle \vec{v},f(a) \rangle < \langle \vec{v},f(b) \rangle$ whenever $a < b$; here $\langle \cdot, \cdot \rangle$ denotes the inner product.
Two curves $\gamma$ and $\gamma'$ in $\mathbb{R}^2$ are \textit{mutually monotone} if there exists $\vec{v} \in \mathbb{S}^1$ such that $\gamma$ and $\gamma'$ are both $\vec{v}$-monotone.
One can easily show that two mutually monotone curves have linear number of proper intersection points.

\begin{fact} \label{fact-monoint}
Let $\gamma$ and $\gamma'$ be two circular curves that are mutually monotone.
Then $\gamma$ and $\gamma'$ have $O(m+m')$ proper intersection points, where $m = \lVert \gamma \rVert$ and $m' = \lVert \gamma' \rVert$.
\end{fact}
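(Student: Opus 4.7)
The plan is to exploit the common $\vec{v}$-monotonicity via a sweep-line argument in which each ``active pair'' of circular arcs contributes only $O(1)$ proper intersections, and the number of distinct active pairs is itself $O(m+m')$.

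First, I would rotate coordinates so that $\vec{v}$ points along the positive $x$-axis. Under this rotation, $\vec{v}$-monotonicity says exactly that $\gamma$ and $\gamma'$ are graphs of continuous functions $y = f(x)$ on some interval $I$ and $y = g(x)$ on some interval $I'$, respectively. Proper intersections of $\gamma$ and $\gamma'$ then correspond precisely to isolated zeros of $h(x) = f(x) - g(x)$ on $I \cap I'$, together possibly with points where $f$ and $g$ agree at an endpoint of some arc.

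Next, I would sweep $x$ across $I \cap I'$, tracking the single arc of $\gamma$ and the single arc of $\gamma'$ whose graph currently contains the sweep line. The active arc on $\gamma$ can change only at an $x$-coordinate of one of the $O(m)$ arc endpoints of $\gamma$, and similarly for $\gamma'$; hence $I \cap I'$ decomposes into at most $m + m' + 1$ open subintervals on each of which both $\gamma$ and $\gamma'$ lie on fixed circles (or lines, treated as degenerate circles). Within one such subinterval, the two hosting circles either coincide, in which case the two arcs overlap on the subinterval, $h$ vanishes identically there, and the subinterval contributes no isolated zeros; or the hosting circles are distinct, in which case they meet in at most two points and the subinterval contributes at most two proper intersection points. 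Summing over subintervals yields $O(m+m')$ proper intersections from the open pieces, and at most an additional $m+m'$ at the shared arc-endpoints themselves, giving the claimed $O(m+m')$ total.

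The argument is short once the sweep setup is in place; the step that actually uses mutual monotonicity, and hence the only place where the hypothesis is needed, is the bound of $O(m+m')$ on the number of active arc pairs. This is exactly what rules out the general situation mentioned right before the fact, where two arbitrary circular curves of complexities $m$ and $m'$ can have $\Theta(mm')$ proper intersections: without a common sweep direction, every pair of arcs could contribute two intersections, and nothing keeps the number of interacting pairs linear. Mutual $\vec{v}$-monotonicity induces a single linear sweep order on both curves simultaneously, which is precisely what collapses the naive quadratic bound to a linear one, so the only care needed in the write-up is the bookkeeping distinction between proper (isolated) intersections and overlap intervals on coincident circles.
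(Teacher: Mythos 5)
Your argument is essentially the paper's: rotate so $\vec v$ is the $x$-axis, split the common $x$-range into $O(m+m')$ vertical strips at the arc endpoints of both curves, observe that within each strip each curve is a single circular arc (a function graph), and bound the proper intersections per strip by two. The only addition is your explicit treatment of coinciding supporting circles and of arc endpoints, which the paper leaves implicit in the definition of ``proper intersection point''; this is harmless bookkeeping and does not change the approach.
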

\begin{proof}
Without loss of generality, assume $\gamma$ and $\gamma'$ are $x$-monotone.
Let $V$ (resp., $V'$) denote the set of vertices on $\gamma$ (resp., $\gamma'$).
Suppose the $x$-coordinates of the points in $V \cup V'$ are $x_0,x_1,\dots,x_r$, where $r = O(m+m')$ and $x_0 \leq x_1 \leq \cdots \leq x_r$.
For each $i \in [r]$, define $X_i = [x_{i-1},x_i] \times \mathbb{R}^2$, which is a vertical strip.
The part of $\gamma$ (resp., $\gamma'$) inside each $X_i$ is a circular arc.
Two circular arcs can have at most two proper intersection points.
As such, $\gamma$ and $\gamma'$ have at most two proper intersection points inside each $X_i$, and thus have in total at most $2r = O(m+m')$ proper intersection points.
\end{proof}

Based on the above observation, our key idea is to decompose $\partial (R_1 \oplus D)$ and $\partial (R_2 \oplus D)$ into $O(1)$ pieces such that any two of these pieces are mutually monotone.
As long as this is possible, we can easily bound the number of proper intersection points of $\partial (R_1 \oplus D)$ and $\partial (R_2 \oplus D)$.
To do this decomposition, we need to first establish some nice geometric properties of the Minkowski sum of a ``tiny'' circular domain and the unit disk $D$.

Fix a circular domain $R$ that is inside an $\varepsilon \times \varepsilon$ grid cell $\Box$, and let $o$ be the center of $\Box$.
Observation~\ref{obs-star} implies that any ray $r$ originated from $o$ intersects $\partial (R \oplus D)$ at exactly one point.
Indeed, if $r$ intersects $\partial (R \oplus D)$ at two points $a$ and $b$ where $a$ is closer to $o$ than $b$, then the interior of $\overline{ob}$ is not contained in the interior of $R \oplus D$ as ($a$ is in the interior of $\overline{ob}$ but not in the interior of $R \oplus D$), which contradicts Observation~\ref{obs-star}.
It follows that the map $\pi: \partial (R \oplus D) \rightarrow \mathbb{S}^1$ defined as $\pi(p) = \overrightarrow{op}/ \lVert \overrightarrow{op} \rVert$ is bijective and is thus a homeomorphism.
Therefore, $\partial (R \oplus D)$ has the nice property that if a point $p$ moves along $\partial (R \oplus D)$ in one direction, then the vector $\overrightarrow{op}$ also rotates in one direction.

A point $p \in \partial (R \oplus D)$ is a \textit{non-vertex point} if it is in the interior of a circular arc of $\partial (R \oplus D)$.
Consider a non-vertex point $p \in \partial (R \oplus D)$ lying in the interior of the circular arc $\sigma$ of $\partial (R \oplus D)$.
There exists a unique line $\ell$ going through $p$ that is tangent to $\partial (R \oplus D)$, which is also the tangent line of $\sigma$ at $p$.
A \textit{tangent vector} of $R \oplus D$ at $p$ refers to a unit vector parallel to $\ell$.
Clearly, there are two tangent vectors of $R \oplus D$ at $p$, among which one vector $\vec{v}$ indicates the clockwise direction in the sense that if $p$ moves along $\partial (R \oplus D)$ in the direction of $\vec{v}$, then $\overrightarrow{op}$ rotates clockwise.
We then call $\vec{v}$ the \textit{clockwise tangent vector} (or \textit{clockwise tangent} for short) of $R \oplus D$ at $p$, and denote it by $\mathsf{tan}_p$.
In the next observation, we prove that $\mathsf{tan}_p$ is almost perpendicular to the vector $\overrightarrow{op}$.
For two nonzero vectors $\vec{u}$ and $\vec{v}$ in the plane, let $\mathsf{ang}(\vec{u},\vec{v})$ denote the \textit{clockwise ordered angle} from $\vec{u}$ to $\vec{v}$, i.e., the angle between $\vec{u}$ and $\vec{v}$ that is to the clockwise of $\vec{u}$ and to the counter-clockwise of $\vec{v}$.

\begin{observation} \label{obs-almostperp}
For any non-vertex point $p \in \partial (R \oplus D)$, we have $|\mathsf{ang}(\overrightarrow{op},\mathsf{tan}_p) - \frac{\pi}{2}| \leq 4 \varepsilon$, where $\mathsf{tan}_p$ is the clockwise tangent of $R \oplus D$ at $p$.
\end{observation}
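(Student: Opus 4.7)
The plan is to identify, for every non-vertex boundary point $p \in \partial(R \oplus D)$, a \emph{generator} $q \in R$ with $\mathsf{dist}(p,q) = 1$ such that the outward unit normal to $R \oplus D$ at $p$ equals $\overrightarrow{qp}$, and then to exploit the fact that $\mathsf{dist}(o,q) \leq \varepsilon/\sqrt{2}$ (while $\mathsf{dist}(p,q) = 1$) to argue that $\overrightarrow{op}$ is nearly parallel to $\overrightarrow{qp}$. Since $\mathsf{tan}_p$ is perpendicular (in the clockwise sense) to the outward normal, this yields $\mathsf{ang}(\overrightarrow{op},\mathsf{tan}_p)\approx \pi/2$.

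First, I would produce the generator. Because $R$ is compact and $p \in \partial(R \oplus D)$, the set $\{x : \mathsf{dist}(x,R) \leq 1\}$ has $p$ on its boundary, so $\mathsf{dist}(p,R) = 1$ and a nearest point $q \in R$ with $\mathsf{dist}(p,q) = 1$ exists. Because $p$ is a non-vertex point, the boundary of $R \oplus D$ is smooth at $p$, and I would verify (by a short case analysis of the two types of boundary arcs of a Minkowski sum with a disk --- unit arcs centered at vertices of $\partial R$, versus outward offsets by $1$ of arcs or segments of $\partial R$, with pseudo-convexity of $R$ ensuring these offsets go in the correct direction) that the outward normal to $R \oplus D$ at $p$ coincides with $\overrightarrow{qp}/\lVert\overrightarrow{qp}\rVert = \overrightarrow{qp}$.

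Next, I would bound the unsigned angle $\alpha$ between $\overrightarrow{op}$ and $\overrightarrow{qp}$. Since both $o$ and $q$ lie in the $\varepsilon \times \varepsilon$ cell $\Box$, we have $\mathsf{dist}(o,q) \leq \varepsilon/\sqrt{2}$, and therefore $\mathsf{dist}(o,p) \geq \mathsf{dist}(p,q) - \mathsf{dist}(o,q) \geq 1 - \varepsilon/\sqrt{2}$ by the triangle inequality. Applying the law of sines in triangle $opq$ (the angle at $p$ is precisely the unsigned angle between $\overrightarrow{po}$ and $\overrightarrow{pq}$, which equals the one between $\overrightarrow{op}$ and $\overrightarrow{qp}$), we get $\sin\alpha \leq \mathsf{dist}(o,q)/\mathsf{dist}(o,p) \leq (\varepsilon/\sqrt{2})/(1-\varepsilon/\sqrt{2})$. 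For $\varepsilon \leq 0.1$ this easily gives $\alpha \leq 4\varepsilon$ with room to spare.

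Finally, by definition of the clockwise tangent, $\mathsf{tan}_p$ is obtained from the outward normal $\overrightarrow{qp}$ by a clockwise rotation of $\pi/2$, so $\mathsf{ang}(\overrightarrow{qp},\mathsf{tan}_p) = \pi/2$. Additivity of clockwise ordered angles modulo $2\pi$ gives $\mathsf{ang}(\overrightarrow{op},\mathsf{tan}_p) = \mathsf{ang}(\overrightarrow{op},\overrightarrow{qp}) + \pi/2 \pmod{2\pi}$; in either orientation of $\overrightarrow{qp}$ relative to $\overrightarrow{op}$ (clockwise or counter-clockwise), this is within $\alpha$ of $\pi/2$, yielding the claim. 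The main obstacle I anticipate is the first step: justifying cleanly that, at every non-vertex point of $\partial(R \oplus D)$, the outward normal equals $\overrightarrow{qp}$ for the nearest point $q \in R$. This requires treating the vertex-generated unit arcs and the offset arcs/segments of $\partial R$ separately, and using pseudo-convexity of $R$ to rule out inward-pointing configurations that would break the nearest-point characterization.
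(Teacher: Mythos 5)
Your proposal is correct and follows essentially the same approach as the paper: locate a generator $q \in R$ at distance $1$ from $p$, argue that the tangent line of $\partial(R \oplus D)$ at $p$ is perpendicular to $\overrightarrow{qp}$, and then bound $\angle opq$ via the law of sines using the fact that $\mathsf{dist}(o,q)$ is tiny compared to $\mathsf{dist}(p,q) = 1$. The one step you flag as your main obstacle---justifying the perpendicularity of the tangent to $\overrightarrow{qp}$---is handled by the paper with a cleaner argument that sidesteps any case analysis on the structure of $\partial R$: since the boundary arc $\sigma$ through $p$ lies on $\partial(R\oplus D)$ while the unit circle $\partial D_q$ bounds a disk $D_q \subseteq R\oplus D$, these two curves meet at $p$ but cannot cross there, hence they are tangent and share the same tangent line, which is perpendicular to $\overrightarrow{qp}$.
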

\begin{proof}
Consider a non-vertex point $p \in \partial (R \oplus D)$ and suppose it is in the interior of a circular arc $\sigma$ of $\partial (R \oplus D)$.
As $p \in \partial (R \oplus D)$, there exists a point $q \in R$ such that $\mathsf{dist}(p,q) = 1$.
Let $D_q$ be the unit disk centered at $q$, whose boundary $\partial D_q$ contains $p$.
Now the circular arc $\sigma$ and the circle $\partial D_q$ intersect at $p$.
But $\sigma$ and $\partial D_q$ cannot cross each other at $p$, because $\sigma$ is a portion of $\partial (R \oplus D)$ and $D_q \subseteq R \oplus D$.
Therefore, $\sigma$ and $\partial D_q$ are tangent at $p$, and they share the same tangent line $\ell$ at $p$.
See Figure~\ref{fig-almostperp} for an illustration.
As a tangent line of $D_q$ at $p$, $\ell$ is perpendicular to $\overrightarrow{qp}$.
We claim that the (smaller) angle between $\ell$ and $\overrightarrow{op}$ is at least $\frac{\pi}{2} - 4 \varepsilon$.
It suffices to show that $\angle opq \leq 4 \varepsilon$, since the angle between $\ell$ and $\overrightarrow{qp}$ is $\frac{\pi}{2}$.
We have $\mathsf{dist}(p,q) = 1$ and $\mathsf{dist}(o,q) \leq 2 \varepsilon$.
By the formula $\frac{\mathsf{dist}(o,q)}{\mathsf{dist}(p,q)} = \frac{\sin \angle opq}{\sin \angle poq}$, we have that $\frac{\sin \angle opq}{\sin \angle poq} \leq 2 \varepsilon$ and thus $\sin \angle opq \leq 2 \varepsilon$.
It follows that $\angle opq \leq 4 \varepsilon$.
Let $\mathsf{tan}_p$ be the clockwise tangent of $R \oplus D$ at $p$.
Since the angle between $\ell$ and $\overrightarrow{op}$ is at least $\frac{\pi}{2} - 4 \varepsilon$, we have either $|\mathsf{ang}(\overrightarrow{op},\mathsf{tan}_p) - \frac{\pi}{2}| \leq 4 \varepsilon$ or $|\mathsf{ang}(\overrightarrow{op},\mathsf{tan}_p) - \frac{3\pi}{2}| \leq 4 \varepsilon$.
But by the definition of \textit{clockwise} tangent, $\mathsf{ang}(\overrightarrow{op},\mathsf{tan}_p) \leq \pi$.
Therefore, $|\mathsf{ang}(\overrightarrow{op},\mathsf{tan}_p) - \frac{\pi}{2}| \leq 4 \varepsilon$.
\end{proof}

\begin{figure}[h]
    \centering
    \includegraphics[height=4.5cm]{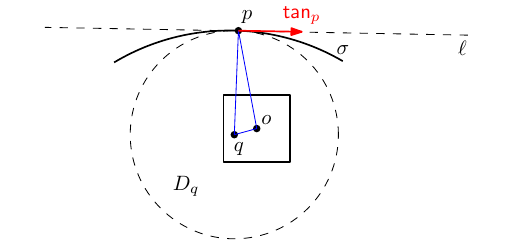}
    \caption{Illustrating the proof of Observation~\ref{obs-almostperp}.}
    \label{fig-almostperp}
\end{figure}

The next observation allows us to test the monotoncity of a piece of $\partial (R \oplus D)$ by checking the clockwise tangent vectors at points on that piece.

\begin{observation} \label{obs-monotone}
Let $\gamma$ be a connected portion of $\partial (R \oplus D)$ and $\vec{v} \in \mathbb{S}^1$ be a unit vector.
If $\langle \vec{v},\mathsf{tan}_p \rangle > 0$ for any non-vertex point $p \in \gamma$ or $\langle \vec{v},\mathsf{tan}_p \rangle < 0$ for any non-vertex point $p \in \gamma$, then $\gamma$ is $\vec{v}$-monotone.
\end{observation}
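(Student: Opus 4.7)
\medskip
\noindent\textbf{Proof plan for Observation~\ref{obs-monotone}.}
The plan is to parametrize $\gamma$ by a piecewise-smooth curve traversed in the clockwise direction along $\partial(R \oplus D)$ and then analyze the real-valued function obtained by projecting onto $\vec{v}$. Without loss of generality, assume $\langle \vec{v}, \mathsf{tan}_p \rangle > 0$ for every non-vertex point $p \in \gamma$; the case $\langle \vec{v}, \mathsf{tan}_p \rangle < 0$ is handled by reversing the orientation. Using the homeomorphism $\pi: \partial(R \oplus D) \to \mathbb{S}^1$ established from Observation~\ref{obs-star}, the curve $\gamma$ is a connected sub-arc of the simple closed circular curve $\partial(R \oplus D)$; hence there exists a continuous homeomorphism $f:[0,1]\to\gamma$ that traces $\gamma$ in the clockwise direction. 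Moreover, $[0,1]$ decomposes into finitely many closed subintervals (one per circular arc of $\gamma$) on the interior of each of which $f$ is smooth, and for every non-vertex parameter $t$ the unit tangent $f'(t)/\lVert f'(t) \rVert$ coincides (by the definition of the clockwise tangent) with $\mathsf{tan}_{f(t)}$.

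The next step is to set $g(t) = \langle \vec{v}, f(t) \rangle$ and show that $g$ is strictly increasing on $[0,1]$. For any $t$ lying in the interior of one of the above subintervals, the chain rule gives
\[
 g'(t) \;=\; \langle \vec{v}, f'(t) \rangle \;=\; \lVert f'(t) \rVert \cdot \langle \vec{v}, \mathsf{tan}_{f(t)} \rangle \;>\; 0,
\]
by the hypothesis. Thus $g:[0,1]\to\mathbb{R}$ is continuous on $[0,1]$ and has a strictly positive derivative on the complement of the finite set of vertex-parameters. A standard piecewise-monotonicity argument (apply the mean value theorem on each closed subinterval and concatenate, noting that $g$ is continuous at the vertex-parameters) yields that $g$ is strictly increasing on all of $[0,1]$.

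Strict monotonicity of $g$ together with the fact that $f$ is a homeomorphism gives exactly the condition in the definition of $\vec{v}$-monotonicity: $\langle \vec{v}, f(a) \rangle < \langle \vec{v}, f(b) \rangle$ whenever $a < b$, so $\gamma$ is $\vec{v}$-monotone, as desired. The only mildly delicate points are (i) the existence of the clockwise-oriented homeomorphism $f$, which is guaranteed by the fact that $\partial(R \oplus D)$ is a simple closed curve homeomorphic to $\mathbb{S}^1$ via $\pi$, and (ii) the justification that $g$ is globally strictly increasing across vertex-parameters, which is immediate from continuity and piecewise strict increase.
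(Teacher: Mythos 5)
Your proposal is correct and follows essentially the same route as the paper's proof: parametrize $\gamma$ by a homeomorphism $f:[0,1]\to\gamma$ traversing it clockwise, split $[0,1]$ at the vertex-parameters into subintervals mapping to circular arcs, show $g(t)=\langle \vec{v},f(t)\rangle$ is strictly increasing on each open subinterval from the sign of $\langle\vec{v},\mathsf{tan}_{f(t)}\rangle$, and extend to all of $[0,1]$ by continuity. The only cosmetic difference is that you make the on-arc monotonicity explicit via the chain rule and the mean value theorem (which requires choosing a piecewise-smooth parametrization, easily arranged since the arcs are circular), whereas the paper asserts it directly from the ``$f(x)$ moves clockwise and the tangent has positive inner product with $\vec{v}$'' geometry; both reduce to the same observation.
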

\begin{proof}
Without loss of generality, we only need to consider the case $\langle \vec{v},\mathsf{tan}_p \rangle > 0$ for any non-vertex point $p \in \gamma$.
Let $a,b \in \gamma$ be the two endpoints of $\gamma$ such that $\gamma$ is on the clockwise (resp., counterclockwise) side of $a$ (resp., $b$).
Consider a homeomorphism $f:[0,1] \rightarrow \gamma$ with $f(0) = a$ and $f(1) = b$.
Define $I = \{x \in [0,1]: f(x) \text{ is a vertex point}\}$ and suppose $I = \{x_1,\dots,x_{r-1}\}$ where $x_1 < \cdots < x_{r-1}$.
Set $x_0 = 0$ and $x_r = 1$.
For each $i \in [r]$, the image $f([x_{i-1},x_i])$ is a circular arc on $\partial (R \oplus D)$.
When $x$ goes from $x_{i-1}$ to $x_i$, $f(x)$ is moving clockwise around $o$.
As $\langle \vec{v},\mathsf{tan}_{f(x)} \rangle > 0$ for any $x \in (x_{i-1},x_i)$ and $f(x)$ is moving clockwise around $o$ when $x$ goes from $x_{i-1}$ to $x_i$, the function $g(x) = \langle \vec{v},f(x) \rangle$ is increasing on the open interval $(x_{i-1},x_i)$ and is thus increasing on the closed interval $[x_{i-1},x_i]$ because it is continuous.
Since $[0,1] = \bigcup_{i=1}^r [x_{i-1},x_i]$, $g(x)$ is increasing on the entire range $[0,1]$, which implies that $\gamma$ is $\vec{v}$-monotone.
\end{proof}

Let $\Lambda \geq 10$ be an integer.
Consider a subdivision of $\partial (R \oplus D)$ into $\Lambda$ pieces as follows.
We shoot $\Lambda$ rays from $o$, which evenly divide the $2\pi$ angle around $o$ into $\Lambda$ angles each of size $2\pi/\Lambda$.
These rays intersect $\partial (R \oplus D)$ at $\Lambda$ points $p_1,\dots,p_\Lambda$ such that $\mathsf{ang}(\overrightarrow{op_{i-1}},\overrightarrow{op_i}) = 2\pi/\Lambda$; for convenience, here we set $p_0 = p_\Lambda$.
The points $p_1,\dots,p_\Lambda$ subdivide $\partial (R \oplus D)$ into $\Lambda$ circular curves $\gamma_1,\dots,\gamma_\Lambda$, where $\gamma_i$ is the one with endpoints $p_{i-1}$ and $p_i$.
We call $\gamma_1,\dots,\gamma_\Lambda$ a \textit{$\Lambda$-decomposition} of $\partial (R \oplus D)$.
See Figure~\ref{fig-decomp} for an illustration (while we require $\Lambda \geq 10$, the figure only shows $\Lambda = 8$ for simplicity).
We observe that each curve $\gamma_i$ is monotone with respect to almost all vectors $\vec{v} \in \mathbb{S}^1$ (if $\Lambda$ is sufficiently large and $\varepsilon$ is sufficiently small).

\begin{figure}[h]
    \centering
    \includegraphics[height=5.2cm]{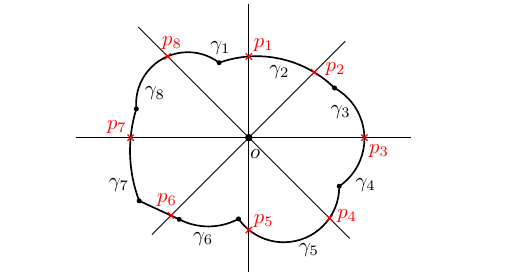}
    \caption{Illustrating the $\Lambda$-decomposition for $\Lambda = 8$.}
    \label{fig-decomp}
\end{figure}

\begin{observation} \label{obs-monorange}
For every $i \in [\Lambda]$, $\gamma_i$ is $\vec{v}$-monotone for all $\vec{v} \in \mathbb{S}^1$ such that $\varepsilon' \leq \mathsf{ang}(\overrightarrow{op_i},\vec{v}) \leq \pi - \varepsilon'$ or $\pi + \varepsilon' \leq \mathsf{ang}(p_i,\vec{v}) \leq 2\pi - \varepsilon'$, where $\varepsilon' = 7(\frac{1}{\Lambda}+\varepsilon)$.
\end{observation}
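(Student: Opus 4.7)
The plan is to invoke Observation~\ref{obs-monotone}, so the task reduces to showing that for every non-vertex point $p \in \gamma_i$ the inner product $\langle \vec v, \mathsf{tan}_p\rangle$ has a constant nonzero sign (depending only on which of the two prescribed ranges $\vec v$ lies in). The argument will combine three angular estimates: a bound on how much $\overrightarrow{op}$ can deviate from $\overrightarrow{op_i}$ as $p$ ranges over $\gamma_i$; Observation~\ref{obs-almostperp}, which says $\mathsf{tan}_p$ is nearly perpendicular to $\overrightarrow{op}$; and the hypothesis, which forces $\vec v$ to be far from $\pm\overrightarrow{op_i}$.

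First I would let $\vec u_i$ denote the unit vector obtained from $\overrightarrow{op_i}/\lVert\overrightarrow{op_i}\rVert$ by clockwise rotation through $\pi/2$, so that $\mathsf{ang}(\overrightarrow{op_i},\vec u_i) = \pi/2$. By the construction of the $\Lambda$-decomposition, the endpoints of $\gamma_i$ are $p_{i-1}$ and $p_i$, and since the map $\pi: \partial(R\oplus D)\to\mathbb{S}^1$ from the discussion preceding Observation~\ref{obs-almostperp} is a homeomorphism, for every non-vertex $p \in \gamma_i$ the direction $\overrightarrow{op}$ lies between $\overrightarrow{op_{i-1}}$ and $\overrightarrow{op_i}$ in the clockwise sense. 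Hence $|\mathsf{ang}(\overrightarrow{op_i},\overrightarrow{op})| \le 2\pi/\Lambda$ (identifying angles with their representatives in $(-\pi,\pi]$). Observation~\ref{obs-almostperp} then gives $|\mathsf{ang}(\overrightarrow{op},\mathsf{tan}_p) - \pi/2| \le 4\varepsilon$, and composing these two estimates yields
\[
|\mathsf{ang}(\vec u_i,\mathsf{tan}_p)| \;\le\; \frac{2\pi}{\Lambda} + 4\varepsilon.
\]

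Next I would handle the first range of $\vec v$. If $\varepsilon' \le \mathsf{ang}(\overrightarrow{op_i},\vec v)\le \pi-\varepsilon'$, then $\mathsf{ang}(\vec u_i,\vec v)$ lies in $[\varepsilon'-\pi/2,\;\pi/2-\varepsilon']$, so the unsigned angle between $\vec u_i$ and $\vec v$ is at most $\pi/2 - \varepsilon'$. Combining this with the previous display, the unsigned angle between $\mathsf{tan}_p$ and $\vec v$ is at most
\[
\Bigl(\tfrac{\pi}{2} - \varepsilon'\Bigr) + \Bigl(\tfrac{2\pi}{\Lambda} + 4\varepsilon\Bigr).
\]
A direct check (using $2\pi < 7$) shows $\varepsilon' = 7(\tfrac{1}{\Lambda}+\varepsilon) > \tfrac{2\pi}{\Lambda}+4\varepsilon$, so this quantity is strictly less than $\pi/2$. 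Therefore $\langle \vec v,\mathsf{tan}_p\rangle > 0$ for every non-vertex $p \in \gamma_i$, and Observation~\ref{obs-monotone} delivers $\vec v$-monotonicity of $\gamma_i$. The symmetric range $\pi+\varepsilon' \le \mathsf{ang}(\overrightarrow{op_i},\vec v)\le 2\pi-\varepsilon'$ is handled by the same computation applied to $-\vec v$: it gives $\langle \vec v,\mathsf{tan}_p\rangle < 0$ throughout $\gamma_i$, and Observation~\ref{obs-monotone} again applies.

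The main thing to be careful about will be the bookkeeping of signed versus unsigned angles and the direction conventions for $\vec u_i$ and $\mathsf{tan}_p$; all of the geometric content is already packaged in Observations~\ref{obs-almostperp} and \ref{obs-monotone}, so the remaining work is essentially an angle-chasing exercise plus the numerical inequality $7 > 2\pi$ which is exactly what makes the constant $7$ in the definition of $\varepsilon'$ work.
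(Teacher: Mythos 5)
Your proposal is correct and follows essentially the same route as the paper's proof: both introduce the auxiliary perpendicular vector $\vec{u}_i$ with $\mathsf{ang}(\overrightarrow{op_i},\vec{u}_i)=\pi/2$, combine the bound $\mathsf{ang}(\overrightarrow{op},\overrightarrow{op_i})\le 2\pi/\Lambda$ with Observation~\ref{obs-almostperp} to place $\mathsf{tan}_p$ within angle $2\pi/\Lambda+4\varepsilon$ of $\vec{u}_i$, use the hypothesis to place $\vec{v}$ within angle $\pi/2-\varepsilon'$ of $\vec{u}_i$, apply the angular triangle inequality together with $\varepsilon'>2\pi/\Lambda+4\varepsilon$ to get a fixed sign for $\langle\vec{v},\mathsf{tan}_p\rangle$, and finish via Observation~\ref{obs-monotone} with the second range handled by $\vec{v}\mapsto -\vec{v}$. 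The angle-chasing bookkeeping and the numerical observation $7>2\pi$ are exactly what the paper uses, so this is a faithful reconstruction.
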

\begin{proof}
Let $\vec{v} \in \mathbb{S}^1$.
Without loss of generality, we only need to consider the case $\varepsilon' \leq \mathsf{ang}(\overrightarrow{op_i},\vec{v}) \leq \pi - \varepsilon'$ because $\gamma_i$ is $\vec{v}$-monotone iff $\gamma_i$ is $(-\vec{v})$-monotone.
By construction, we have $\mathsf{ang}(\overrightarrow{op},\overrightarrow{op_i}) \leq 2\pi/\Lambda$ for any non-vertex point $p \in \gamma_i$.
Furthermore, by Observation~\ref{obs-almostperp}, we have $|\mathsf{ang}(\overrightarrow{op},\mathsf{tan}_p) - \frac{\pi}{2}| \leq 4\varepsilon$ for any non-vertex point $p \in \gamma_i$.
As $\Lambda \geq 10$ and $\varepsilon \leq 0.1$, both $2\pi/\Lambda$ and $4 \varepsilon$ are sufficiently small.
So we have $|\mathsf{ang}(\overrightarrow{op_i},\mathsf{tan}_p) - \frac{\pi}{2}| \leq 2\pi/\Lambda + 4\varepsilon$ for any non-vertex point $p \in \gamma_i$.
We claim that $\langle \vec{v},\mathsf{tan}_p \rangle > 0$ for any non-vertex point $p \in \gamma_i$.
Let $\vec{u} \in \mathbb{S}^1$ be the unique vector satisfying $\mathsf{ang}(\overrightarrow{op_i},\vec{u}) = \frac{\pi}{2}$.
Then the smaller angle between $\vec{v}$ and $\vec{u}$ is at most $\frac{\pi}{2} - \varepsilon'$.
By the fact $|\mathsf{ang}(\overrightarrow{op_i},\mathsf{tan}_p) - \frac{\pi}{2}| \leq 2\pi/\Lambda + 4\varepsilon$, we know that the smaller angle between $\mathsf{tan}_p$ and $\vec{u}$ is at most $2\pi/\Lambda + 4\varepsilon$, which is strictly smaller than $7(\frac{1}{\Lambda}+\varepsilon) = \varepsilon'$.
Therefore, the smaller angle between $\vec{v}$ and $\mathsf{tan}_p$ is strictly smaller than $\frac{\pi}{2}$, which implies $\langle \vec{v},\mathsf{tan}_p \rangle > 0$.
Finally, by Observation~\ref{obs-monotone}, $\gamma_i$ is $\vec{v}$-monotone.
\end{proof}

Now we are ready to bound the number of proper intersection points of $\partial (R_1 \oplus D)$ and $\partial (R_2 \oplus D)$.
Take a $\Lambda$-decomposition $\gamma_1,\dots,\gamma_\Lambda$ of $\partial (R_1 \oplus D)$ and a $\Lambda$-decomposition $\eta_1,\dots,\eta_\Lambda$ of $\partial (R_2 \oplus D)$ for $\Lambda = 10$.
Observe that $\gamma_i$ and $\eta_j$ are mutually monotone for all $i,j \in [\Lambda]$.
To see this, let $V_\gamma = \{\vec{v} \in \mathbb{S}^1: \gamma_i \text{ is } \vec{v}\text{-monotone}\}$ and $V_\eta = \{\vec{v} \in \mathbb{S}^1: \eta_j \text{ is } \vec{v}\text{-monotone}\}$.
By Observation~\ref{obs-monorange}, $V_\gamma$ contains two disjoint arcs on $\mathbb{S}^1$ each of length $\pi - 2 \varepsilon'$, and so does $V_\eta$.
As $\Lambda = 10$ and $\varepsilon \leq 0.1$, we have $2 \varepsilon' \leq 1.4 \leq \frac{\pi}{2}$ and thus $\pi - 2 \varepsilon' \geq \frac{\pi}{2}$.
Therefore, $V_\gamma$ (resp., $V_\eta$) contains two disjoint arcs on $\mathbb{S}^1$ each of length at least $\frac{\pi}{2}$.
It follows that $V_\gamma \cap V_\eta \neq \emptyset$, which further implies that $\gamma_i$ and $\eta_j$ are mutually monotone, as both of them are $\vec{v}$-monotone for any $\vec{v} \in V_\gamma \cap V_\eta$.
By Fact~\ref{fact-monoint}, the number proper intersection points of $\gamma_i$ and $\eta_j$ is $O(M)$.
As $\Lambda = 10$, we know that $\partial (R_1 \oplus D)$ and $\partial (R_2 \oplus D)$ have $O(M)$ proper intersection points.

As the number of proper intersection points of any two curves $\partial (R_i \oplus D)$ and $\partial (R_j \oplus D)$ is $O(M)$, the total number of vertices of the arrangement $\mathcal{A}$ is $O(r^2 M)$. By the definition of $\mathcal{A}$, it may contain overlapping edges. Thus, it is not exactly a planar graph. However, it is not hard to see that the number of edges of $\mathcal{A}$ is $O(r^3 M)$.
Indeed, the number of edges of $\mathcal{A}$ is at most $2r$ times the number of vertices of $\mathcal{A}$, as each vertex can be incident to at most $2r$ edges (each of the $r$ curves contributes at most two edges).
In fact, we can improve the bound to $O(r^2 M)$ as follows.
For each vertex $v$ of $\mathcal{A}$, define $\Delta_v = \Delta_v' + \Delta_v''$, where $\Delta_v'$ is the number of indices $i \in [r]$ such that $v$ is a vertex of $\partial (R_i \oplus D)$ and $\Delta_v''$ is the number of pairs $(i,j) \in [r] \times [r]$ such that $v$ is a proper intersection point of $\partial (R_i \oplus D)$ and $\partial (R_j \oplus D)$.
We claim that $\sum_v \Delta_v' = O(rM)$ and $\sum_v \Delta_v'' = O(r^2 M)$.
Indeed, each vertex of each curve $\partial (R_i \oplus D)$ is counted once in $\sum_v \Delta_v'$ and each proper intersection point of each pair of curves is counted once in $\sum_v \Delta_v''$.
As each curve $\partial (R_i \oplus D)$ has $O(M)$ vertices and each pair of curves have $O(M)$ proper intersection points as shown before, we have $\sum_v \Delta_v' = O(rM)$ and $\sum_v \Delta_v'' = O(r^2 M)$, which implies $\sum_v \Delta_v = O(r^2 M)$.
Clearly, the number of edges of $\mathcal{A}$ incident to each vertex $v$ is bounded by $O(\Delta_v)$.
Therefore, the total number of edges of $\mathcal{A}$ is $O(r^2 M)$.
Finally, we conclude that $\lVert \bigcap_{i=1}^r (R_i \oplus D) \rVert = O(r^2 M)$.

\subsection{Putting everything together}

In Section~\ref{sec-complex}, we obtain the bound $\lVert R(s) \rVert = 2^{O(k)} \cdot n$ for all $s \in \mathcal{T}_\mathsf{top}$.
Combining it with the discussion in Section~\ref{sec-feasible}, we can compute a map $\phi: S \rightarrow \mathbb{R}^2$ respected by a given $\xi: \mathcal{T}_\mathsf{top} \rightarrow \varGamma$ satisfying $d_\phi(t,t')$ for all $(t,t') \in E(\mathcal{T}_\mathsf{top})$ or decide the non-existence of such a map in $2^{O(k)} \cdot n^{O(1)}$ time.
Further combining this with our guess for $\xi$ in Section~\ref{sec-guess}, we obtain a $k^{O(k)} n^{O(1)}$-time algorithm for the decision version of fixed-topology EBST.

Recall the standard parametric search technique of Megiddo~\cite{megiddo1983applying}, which can convert a decision algorithm $\mathcal{A}$ for some problem with running time $T(n)$ to an optimization algorithm for the same problem with running time $O(T^2(n))$, by using $\mathcal{A}$ as both the test algorithm and the decision algorithm.
Therefore, our $k^{O(k)} n^{O(1)}$-time decision algorithm implies a $k^{O(k)} n^{O(1)}$-time optimization algorithm algorithm (for fixed-topology EBST) as well.
Finally, combining this with the reduction in Section~\ref{sec-reduction}, we obtain a $k^{O(k)} n^{O(1)}$-time algorithm for EBST.

\main*

One can verify that our proof of Theorem~\ref{thm:ebst} also works for the $\ell_p$ variant of EBST for any rational $1 \leq p \leq \infty$.
In the $\ell_p$ setting, everything is the same except that the unit disk $D$ becomes an $\ell_p$ unit disk.
Consequently, the boundary arcs of the ``circular'' domains become $\ell_p$ arcs.
However, our arguments did not use any special property of \textit{circular} arcs, and thus still apply (possibly with different parameters $\varepsilon$ and $\Lambda$).
More generally, our algorithm works as long as $D$ is convex and its boundary consists of $O(1)$ algebraic curves. Due to the $O(1)$ complexity of these curves, the intersection points of any two of them can be computed in constant time. 
As such, Theorem~\ref{thm:ebst} can also be extended to some other metrics on $\mathbb{R}^2$, such as a positive linear combination of $\ell_p$ metrics.

\bibliographystyle{plainurl}
\bibliography{BST,steiner}


\end{document}